\documentclass[runningheads]{llncs}
\usepackage{graphicx} 
\usepackage{complexity}
\usepackage{todonotes}
\usepackage{amsfonts, 
amsmath, amssymb, mathrsfs}

\usepackage{tikz}
\usetikzlibrary{calc, arrows.meta}

\usepackage{hyperref}
\usepackage{authblk}
\usepackage{yfonts}
\usepackage{xspace}
\usepackage{comment}
\usepackage{subcaption}

\newcommand{\seq}{\mathfrak{S}\xspace}
\newcommand{\classed}[1]{{\sc {#1}-Edge Deletion}\xspace}
\newcommand{\classvs}[1]{{\sc {#1}-Vertex Splitting}\xspace}
\newcommand{\Yes}{\textsf{YES}\xspace}

\newcommand{\cat}{\mathcal{C}at\xspace}
\newcommand{\up}{\mathcal{P}\xspace}

\newcommand{\nacim}[1]{{\color{purple}N: #1}}

\begin{document}

\title{On the Complexity of Vertex-Splitting Into an Interval Graph\thanks{This research project was supported by the Lebanese American University under the President’s Intramural Research Fund PIRF0056.}
}

\author{Faisal N. Abu-Khzam\inst1
\and Dipayan Chakraborty\inst1\inst2 
\and Lucas Isenmann\inst1\inst3
\and Nacim Oijid\inst1\inst4\thanks{The fourth author was partly supported by the Kempe Foundation Grant No. JCSMK24-515 (Sweden).}}
\authorrunning{Abu-Khzam, Chakraborty, Isenmann and Oijid}
\institute{
Lebanese American University, 
Beirut, Lebanon.
\and
Centrale M\'{e}diterran\'{e}e, LIS, Aix-Marseille Universit\'{e},
Marseille, France
\and 
Université de Strasbourg, France
\and
Umeå University, Umeå, Sweden}

\maketitle 
 
\thispagestyle{empty}

\begin{abstract}
Vertex splitting is a graph modification operation in which a vertex is replaced by multiple vertices such that the union of their neighborhoods equals the neighborhood of the original vertex.
We introduce and study vertex splitting as a graph modification operation for transforming graphs into interval graphs.
Given a graph $G$ and an integer $k$, we consider the problem of deciding whether $G$ can be transformed into an interval graph using at most $k$ vertex splits.
We prove that this problem is \NP-hard, even when the input is restricted to subcubic planar bipartite graphs. We further observe that vertex splitting differs fundamentally from vertex and edge deletions as graph modification operations when the objective is to obtain a chordal graph, even for graphs with maximum independent set size at most two.
On the positive side, we 
give a polynomial-time algorithm for transforming, via a minimum number of vertex splits, a given graph into a disjoint union of paths, and that splitting triangle free graphs into unit interval graphs is also solvable in polynomial time.



\end{abstract}

\section{Introduction}

Interval graphs form a fundamental subclass of intersection graphs and arise naturally in a wide range of applications, including scheduling, genetics, and computational biology
\cite{benzer1959reading,booth1976testing,golumbic1980algorithmic,iwata2004interval,tucker1972structure,zhang2007using}.
An interval graph is the intersection graph of intervals on the real line. This class admits strong structural characterizations and supports efficient algorithms: recognizing interval graphs and constructing interval representations can be done in linear time \cite{booth1976testing}.

From an algorithmic standpoint, interval graphs are appealing because of their one-dimensional structure, which enables efficient solutions to many graph problems that are computationally intractable on general graphs.
Once an interval representation is available, problems such as graph coloring, maximum clique, maximum independent set, and minimum vertex cover admit linear-time algorithms \cite{golumbic1980algorithmic}, and efficient solutions are also known for Hamiltonian path, domination variants, and a variety of scheduling and resource allocation problems \cite{booth1976testing,tucker1972structure}. At the same time, the very same rigid structure makes interval graphs difficult to obtain from arbitrary graphs.
Classical graph modification problems such as Vertex Deletion or Edge Deletion into interval graphs, unit interval graphs, or chordal graphs are known to be \NP-complete, even under strong restrictions on the input graph \cite{lewis1980node,natanzon2001complexity,van2010measuring}.

In this work, we investigate an alternative modification mechanism based on \emph{vertex splitting}.
Given a graph $G=(V,E)$, a vertex split replaces a vertex $v\in V$ by two new vertices $v_1$ and $v_2$, each inheriting a (possibly disjoint) subset of the neighbors of $v$.
Rather than removing vertices or edges, vertex splitting redistributes adjacency, allowing a single vertex to be represented by multiple intervals.
This operation preserves information while relaxing local structural constraints.

We introduce and study the following problem: given a graph $G$ and an integer $k$, can $G$ be transformed into an interval graph using at most $k$ vertex splits?
We believe this problem provides a finer-grained notion of distance to the class of interval graphs and differs fundamentally from deletion-based modification problems.
In particular, we show that the number of vertex splits required to transform a graph into a chordal graph can differ from the number of vertex deletions required to achieve the same goal, even when the input graph has independence number 2 (i.e., the size of a maximum independent set is 2).
This separation further demonstrates that known results for vertex deletion into chordal or interval graphs do not directly transfer to vertex splitting, and that splitting-based modification requires separate analysis. 

We should mention that transforming a graph into an interval graph via a sequence of $k$ vertex splitting operations can be viewed as a transformation/modification into a multi-intersection graph of intervals, which differs slightly from the representation as a k-Interval Graph problem defined in \cite{trotter1979}. In this latter problem the objective is to represent each vertex by $k$ intervals....

Vertex splitting has been studied in several other contexts, including planarization and graph thickness \cite{eppstein2018planar,nollenburg2025planarizing}, rigidity theory \cite{whiteley1990vertex}, flow and decomposition problems \cite{zhang2002circular}, and clustering and graph editing, where it enables the modeling of overlapping structures \cite{abu2025complexity,abu2018cluster,iwoca,firbas2025complexity}.
However, its role as a mechanism for enforcing classical intersection graph structure has remained largely unexplored.

Our contributions can be summarized as follows.
We initiate a systematic study of vertex splitting as a graph modification operation toward interval graphs.
We prove that transforming a graph into an interval graph using a bounded number of vertex splits is \NP-hard, even under strong restrictions on the input.
We show that vertex splitting exhibits complexity behavior that differs fundamentally from vertex deletion, including a strict separation between chordal vertex deletion and chordal vertex splitting.
At the same time, we identify tractable cases: we show that splitting any graph into a disjoint union of paths can be decided in polynomial time. In fact, we show that splitting a given subcubic triangle free graphs into an interval graph is \NP-hard but splitting triangle free graphs into unit interval graphs is solvable in polynomial time.

\section{Preliminaries}

Let $G = (V,E)$ be a graph and let $v$ be a vertex of $G$. Throughout this article, we use standard graph-theoretic notation and assume all considered graph are simple, unweighted and undirected. The sets $V (= V(G))$ and $E (= E(G))$ denote the vertex set and the edge set of $G$, respectively. We denote by $N(v)$ the \emph{open neighborhood} of a vertex $v$. The chromatic number of $G$ is denoted by $\chi(G)$, and $\overline{G}$ denotes the complement of $G$, i.e., $V(\overline{G}) = V(G)$ and $E(\overline{G}) = \{ (xy) \in V(\overline{G}) | x \neq y \text{ and } (xy) \notin E(G) \}$.

When no ambiguity arises, we identify a vertex subset $S \subseteq V(G)$ with the subgraph induced by $S$, and write $\chi(S)$ instead of $\chi(G[S])$. We denote by $\alpha(G)$ the size of a maximum independent set of $G$.

\medskip
\noindent
\textbf{Chordal, Interval and Unit Interval graphs.}
A graph is said to be \emph{chordal} if it contains no induced cycle of length at least $4$.
A graph is an \emph{interval graph} if it is the intersection graph of intervals on the real line.
A graph is a \emph{unit interval graph} if it is the intersection graph of unit-length intervals; such graphs are also known as \emph{indifference graphs}. It is well-known that an interval graph is chordal. Furthermore, a classical result states that a graph is a unit interval graph if and only if it is chordal and contains no induced claw, net, or tent \cite{wegner1967eigenschaften} (see Figure~\ref{fig:unit_interval_forbidden}).

\vspace{5pt}

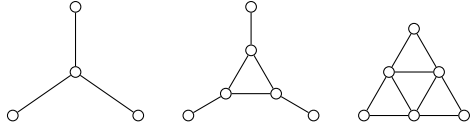
\begin{figure}[htb!]
    \centering
    \begin{tikzpicture}
    [ node_style/.style={circle, draw, inner sep=0.05cm}  ]
        \def\fscale{3} 
    	 \node[node_style] (0) at ($\fscale*(-1, -0.241)$) {};
    	 \node[node_style] (1) at ($\fscale*(-0.722, -0.048)$) {};
    	 \node[node_style] (2) at ($\fscale*(-0.722, 0.241)$) {};
    	 \node[node_style] (3) at ($\fscale*(-0.444, -0.241)$) {};
    	 \node[node_style] (4) at ($\fscale*(-0.222, -0.241)$) {};
    	 \node[node_style] (5) at ($\fscale*(-0.056, -0.144)$) {};
    	 \node[node_style] (6) at ($\fscale*(0.167, -0.144)$) {};
    	 \node[node_style] (7) at ($\fscale*(0.333, -0.241)$) {};
    	 \node[node_style] (8) at ($\fscale*(0.056, 0.048)$) {};
    	 \node[node_style] (9) at ($\fscale*(0.056, 0.241)$) {};
    	 \node[node_style] (10) at ($\fscale*(0.556, -0.241)$) {};
    	 \node[node_style] (11) at ($\fscale*(0.667, -0.048)$) {};
    	 \node[node_style] (12) at ($\fscale*(0.778, -0.241)$) {};
    	 \node[node_style] (13) at ($\fscale*(0.889, -0.048)$) {};
    	 \node[node_style] (14) at ($\fscale*(0.778, 0.144)$) {};
    	 \node[node_style] (15) at ($\fscale*(1, -0.241)$) {};
    
    	 \draw (0) to  (1);
    	 \draw (1) to  (2);
    	 \draw (1) to  (3);
    	 \draw (4) to  (5);
    	 \draw (5) to  (6);
    	 \draw (6) to  (7);
    	 \draw (5) to  (8);
    	 \draw (8) to  (6);
    	 \draw (8) to  (9);
    	 \draw (10) to  (11);
    	 \draw (11) to  (12);
    	 \draw (13) to  (14);
    	 \draw (14) to  (11);
    	 \draw (11) to  (13);
    	 \draw (13) to  (15);
    	 \draw (12) to  (13);
    	 \draw (15) to  (12);
    	 \draw (12) to  (10);
    \end{tikzpicture}
    \caption{The claw, net and tent graphs.}
    \label{fig:unit_interval_forbidden}
\end{figure}

We say that a vertex $v$ is \emph{simplicial} if $G[N(v)]$ is a clique.
A classical characterization, due to Dirac, states that a graph is chordal if and only if it admits a \emph{perfect elimination ordering}, that is, an ordering $v_1,\ldots,v_n$ of the vertices such that $v_i$ is simplicial in $G[\{v_i,\ldots,v_n\}]$ for every $i$ \cite{dirac1961rigid,golumbic1980algorithmic}.

A $(t,d)$-$star$ is a graph consisting of a central vertex $v$, the star center, and $t$ vertex-disjoint paths of length $d$ each that have $v$ as endpoint (each). Figure \ref{fig:t2} shows a $(3,2)$-star, which we refer to hereafter (for simplicity) as $T_2$, also known as a subdivided claw.

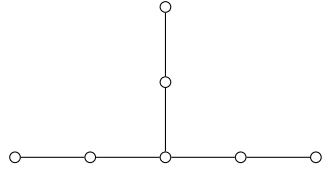
\begin{figure}[!h]
    \centering
\begin{tikzpicture}
    [ node_style/.style={circle, draw, inner sep=0.05cm}  ]
    \def\fscale{2} 
	 \node[node_style] (0) at ($\fscale*(0, 0.5)$) {};
	 \node[node_style] (1) at ($\fscale*(0, 0)$) {};
	 \node[node_style] (2) at ($\fscale*(0, -0.5)$) {};
	 \node[node_style] (3) at ($\fscale*(0.5, -0.5)$) {};
	 \node[node_style] (4) at ($\fscale*(1, -0.5)$) {};
	 \node[node_style] (5) at ($\fscale*(-0.5, -0.5)$) {};
	 \node[node_style] (6) at ($\fscale*(-1, -0.5)$) {};

	 \draw (6) -- (5) -- (2) -- (3) --  (4);
	 \draw (0) -- (1) --  (2);
\end{tikzpicture}
    \caption{The graph $T_2$.}
    \label{fig:t2}
\end{figure}


A \emph{tree} is a connected cycle-free graph. A \emph{caterpillar} is a tree such that removing all its degree-one vertices (i.e. leaves) yields a path, called the spine. Let $\cat$ be the class of caterpillars. For $k$ an integer, let $\cat_k$ be the set of caterpillar graphs of maximum degree $k$. The following characterization will be used in the proof of our main result.

\begin{lemma}\label{lemma: t2free = caterpillar}
Let $T$ be a tree. Then $T$ is $T_2$-free if and only if $T$ is a caterpillar.
\end{lemma}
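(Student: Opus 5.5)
The plan is to prove the two directions separately. Throughout, ``$T_2$-free'' means that $T$ contains no subgraph isomorphic to $T_2$. Since $T$ is a tree, any copy of $T_2$ in $T$ is automatically induced: adding an edge between two vertices of the copy would create a cycle. So the induced and non-induced readings coincide.

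\emph{Caterpillar $\Rightarrow$ $T_2$-free.} Let $T$ be a caterpillar with spine $P$, obtained by deleting all leaves. Suppose for contradiction that $T$ contains a copy of $T_2$ with center $v$ and three legs $v a_i b_i$, $i=1,2,3$. Then $v$ and each $a_i$ have degree at least $2$ in $T$, so none of them is a leaf, and all four lie on the spine $P$. Hence $v$ has three neighbours $a_1,a_2,a_3$ in $P$. But $P$ is a path, where every vertex has at most two neighbours, which is a contradiction. (If the spine is empty or a single vertex, $T$ is a star or $K_2$, and neither contains $T_2$.)

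\emph{$T_2$-free $\Rightarrow$ caterpillar.} Let $T$ be $T_2$-free, and let $T'$ be obtained from $T$ by removing all leaves. Then $T'$ is again a tree, since deleting leaves preserves connectivity and acyclicity; the cases where $T'$ is empty are trivial. It suffices to show that $T'$ has maximum degree at most $2$, because a tree with maximum degree at most $2$ is a path.

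Suppose some $v$ has three neighbours $a_1,a_2,a_3$ in $T'$. Each $a_i$ is a non-leaf of $T$, so it has a neighbour $b_i\neq v$ in $T$. The vertices $b_1,b_2,b_3$ are pairwise distinct and distinct from $v,a_1,a_2,a_3$. Indeed, $b_i=b_j$ for $i\neq j$ would give the cycle $v a_i b_i a_j v$, and $b_i=a_j$ would give the triangle $v a_i a_j$, both impossible in a tree. So $v, a_i, b_i$ form a copy of $T_2$, a contradiction.

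No step is a real obstacle. The only point needing care is the distinctness of the $b_i$, which follows from acyclicity as shown above.
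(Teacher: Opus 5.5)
Your proof is correct and takes the same approach as the paper. The paper also argues that a non-caterpillar tree has a vertex adjacent to three non-leaves, extends each of them by one further edge to obtain $T_2$, and uses acyclicity for distinctness. Your version adds detail: it spells out the direction the paper calls trivial, proves the distinctness of the $b_i$ explicitly, and notes that subgraph and induced-subgraph containment coincide in a tree.
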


\begin{proof}
Every caterpillar is trivially $T_2$-free.
Now suppose $T$ is not a caterpillar, then there exists a vertex $v_0$ adjacent to three vertices of degree at least $2$. Since $T$ is a tree, there are no edges or common neighbors other than $v_0$ between these three vertices. Therefore $v_0$ and its three neighbors along with their neighbors (at distance $2$ from $v_0$) induce a subgraph that contains a $T_2$.
\end{proof}

\medskip
\noindent
\textbf{Vertex splitting.}
We define a \emph{(vertex) splitting} of a vertex $v \in V(G)$ as a graph modification operation in which $v$ is replaced by two new vertices $v_1$ and $v_2$, such that every neighbor of $v$ becomes adjacent to $v_1$, or to $v_2$, or to both.
Formally, splitting $v \in V(G)$ yields a graph $G'$ defined as follows. Let $A,B \subseteq N_G(v)$ with $A \cup B = N_G(v)$. Then
\begin{align*}
V(G') &= \bigl(V(G) \setminus \{v\}\bigr) \cup \{v_1, v_2\},\\
E(G') &= \bigl(E(G) \setminus \{vx : x \in N_G(v)\}\bigr)
         \cup \{v_1x : x \in A\} \cup \{v_2x : x \in B\}.
\end{align*}

Each vertex splitting increases the number of vertices by one. The particular case where $A\cap B = \emptyset$ is of interest, at least from a practical standpoint where splitting is supposed to result in clusters, for example \cite{abu2018cluster}. This will be referred to as \emph{exclusive} vertex splitting in the sequel.

\medskip
\noindent
\textbf{Chordal, Interval and Unit Interval Vertex Splitting.}
Let $G$ be a graph. We denote by $ChVS(G)$ the minimum length of a split sequence which turns $G$ into a chordal graph.
Similarly we define $ChVXS(G)$ as the minimum length of an exclusive split sequence which turns $G$ into a chordal graph.
Similarly we define $IVS(G)$, $IVXS(G)$, $UIVS(G)$ and $UIVXS(G)$ for interval and unit interval graphs.

\section{Complexity of Vertex Splitting into Interval graphs}

We prove the \NP-completeness of deciding whether a graph can be transformed into an interval graph using at most a given number 
of vertex splits. We first focus on the exclusive variant of vertex splitting and show its \NP-hardness.

Since inclusive and exclusive splitting are not equivalent in general, hardness for one variant does not automatically transfer to the other.
We next show that, on triangle-free graphs, any sequence of inclusive splits leading to an interval (or unit interval, or chordal) graph can be converted into a sequence of exclusive splits of no greater length.
This assertion, formalized in Lemma~\ref{lemma:inclusive_to_exclusive}, allows us later to deduce hardness results for the inclusive 
variant.

\begin{lemma} \label{lemma:inclusive_to_exclusive}
Let $G$ be a triangle free graph and let $\sigma$ be a sequence of $l$ splits resulting in an interval graph (respectively, a unit interval graph, or a chordal graph).
Then there exists a sequence of at most $l$ exclusive splits resulting in a graph of the same class.
\end{lemma}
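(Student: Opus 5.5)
The plan is to avoid manipulating the split sequence $\sigma$ step by step, and instead to work with the graph $H$ it produces. Every vertex of $H$ is a copy of some original vertex $v\in V(G)$. Let $\pi:V(H)\to V(G)$ map each copy to its original vertex. By induction on the length of $\sigma$, the following three facts hold.
\begin{itemize}
\item $\pi$ is a graph homomorphism from $H$ onto $G$ in which every edge $uv$ of $G$ has at least one preimage edge.
\item Two copies of the same vertex are never adjacent in $H$.
\item The number of splits used equals $|V(H)|-|V(G)|$.
\end{itemize}
Exclusive splitting corresponds exactly to the situation where every edge of $G$ has exactly one preimage edge in $H$.

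The first key step is to show that $H$ is triangle-free. A triangle $xyz$ in $H$ maps under $\pi$ to three pairwise adjacent vertices of $G$. These are pairwise distinct because copies of the same vertex are non-adjacent, so they would form a triangle in $G$, a contradiction. Hence $H$ is triangle-free.

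Triangle-free graphs in the target classes are well understood. A triangle-free chordal graph is a forest. A triangle-free interval graph is a forest with no induced $T_2$, so by Lemma~\ref{lemma: t2free = caterpillar} it is a disjoint union of caterpillars. A triangle-free unit interval graph is claw-free and a forest, hence a disjoint union of paths. Each of these three families is closed under edge deletion. Deleting an edge of a forest gives a forest. Subtrees of caterpillars are caterpillars, because being $T_2$-free is inherited by subgraphs of trees. Subpaths are paths.

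The second step is to construct $H'$. For every edge $uv\in E(G)$, fix one edge of $H$ between a copy of $u$ and a copy of $v$, and delete all other such edges. The resulting graph $H'$ still lies in the same class as $H$, and in $H'$ every edge of $G$ has exactly one preimage.

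The final step, which I expect to be the only point needing care, is to realise $H'$ by an exclusive split sequence of length at most $l$. Process the vertices of $G$ one at a time. For a vertex $v$ with copies $v^1,\dots,v^r$ in $H'$, perform $r-1$ successive splits of $v$ (and then of the new copies). At each split, assign every current neighbour to the unique side whose eventual copy $v^i$ carries the designated edge. Here a "current neighbour" is a copy of an original neighbour $w$, and it is determined because the edge $vw$ has a single preimage in $H'$. These assignments are disjoint, so every split is exclusive. Copies of $v$ that become isolated in $H'$ are handled in one of two ways: either they are simply not created, which only decreases the count, or they are produced by splits with an empty side, which still satisfies $A\cap B=\emptyset$.

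One more point must be checked: the order of processing is consistent. When $v$ is split after its neighbour $w$ has already been split, the designated copy of $w$ already exists and is adjacent to $v$, so the assignment is well defined. The total number of splits is $|V(H')|-|V(G)|\le l$, and the final graph is $H'$ (possibly minus isolated vertices), which belongs to the required class.
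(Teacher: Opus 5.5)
Your argument is correct, and it takes a genuinely different route from the paper.

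The paper inducts on the number of non-exclusive splits. It replaces the last non-exclusive split $(v,A,B)$ by $(v,A,B\setminus A)$ and trims the later splits accordingly. The final graph is then $G(\sigma)$ with the edges between $D(v_2)$ and $D(A\cap B)$ removed, and monotonicity of forests, disjoint unions of caterpillars and disjoint unions of paths closes the induction step.

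You never modify the sequence. You record its outcome through the projection homomorphism $\pi\colon H\to G$ and delete all redundant preimage edges at once to get $H'$. You then show that any graph with a vertex-surjective homomorphism onto $G$ that is bijective on edges is produced by exactly $|V(H')|-|V(G)|$ exclusive splits, processed vertex by vertex in any order.

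The paper's surgery stays local and within the language of split sequences. Its cost is the descendant bookkeeping and re-checking that each trimmed later split is still a valid exclusive split. Your version handles all non-exclusive splits simultaneously and isolates a reusable realisability criterion for exclusive splitting. The upper-bound construction in Theorem~\ref{thm value splitting into union of paths} implicitly relies on that same criterion when it splits all vertices ``simultaneously''.

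Your approach also makes explicit a step the paper leaves silent: that $G(\sigma)$ is itself triangle-free, so that the caterpillar, path and forest descriptions apply to it at all. Your homomorphism argument settles this in one line.

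The only thing you use without stating it is the converse inclusion: disjoint unions of caterpillars, paths and forests are interval, unit interval and chordal graphs respectively. You need this to conclude that $H'$ stays in the target class; it is standard, and the paper assumes it as well.
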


\begin{proof}
    
Let us proceed by induction on the number of non exclusive splits.
If all splits of $\sigma$ are exclusive then we are done.
Otherwise let $k$ be the number of non exclusive splits of $\sigma$ and suppose that the statement is true for sequences with less than $k-1$ non exclusive splits.

Consider the last non exclusive split $s_i = (v,A,B)$ of the sequence $\sigma = (s_1, ..., s_l)$ where $A \subseteq N(v)$ and $B \subseteq N(v)$ such that $A \cup B = N(v)$ and $A \cap B \neq \emptyset$. Let $v_1$ and $v_2$ be the two created vertices such that $N(v_1) = A$ and $N(v_2) = B$.

For a set $X$ of vertices of $G$ during the split sequence $\sigma$, $D(X)$ denotes the set $X$ and the the descendants of each of the vertices of $X$, i.e. the set of vertices resulting from the split sequence of the vertices in $X$.
We consider the split $s'_i = (v, A, B \setminus A)$.
For every split $s_j = (v_j, A_j, B_j)$ for $j > i$, we define $s'_j = (v_j, A'_j, B'_j)$ as follows: if $v_j \in D(A \cap B)$, $A'_j = A_j \setminus D(v_2)$ and $B'_j = B_j \setminus D(v_2)$, if $v_j \in D(v_2)$, $A'_j = A_j \setminus D(A \cap B)$ and $B'_j = B_j \setminus D(A \cap B)$, otherwise $A'_j = A_j$ and $B'_j = B_j$.
We define the sequence $\sigma' = (s_1, ... s_{i-1}, s_i', ..., s'_l)$.
The splits $s_i', ..., s'_l$ are exclusive.
Therefore the sequence $\sigma'$ has $k-1$ exclusive splits.

The resulting graph $G(\sigma')$ is a subgraph of $G(\sigma)$ where we have removed the edges between $D(v_2)$ and $D(A \cap B)$.

Finally, since triangle free interval graphs are precisely dijoint unions of caterpillars, and any subgraph of a union of caterpillars is also a union of caterpillars, $G(\sigma')$ is an interval graph and $\sigma'$ requires one less non-exclusive split than $\sigma$. We conclude by induction.

For unit interval graphs, the conclusion is the same: the class of triangle free unit interval graphs is monotone (closed under taking arbitrary subgraph) because it is the class of union of paths. Furthermore, the conclusion is the same for chordal graphs: the class of triangle-free chordal graphs is monotone because it is nothing but the class of forests.
\end{proof}

\begin{corollary} \label{corollary:triangle_free_graphs}
    For a triangle-free graph $G$,
    \begin{itemize}
        \item $ChVS(G) =ChVXS(G)$.
        \item $IVS(G) =IVXS(G)$.
        \item $UIVS(G) =UIVXS(G)$.
    \end{itemize}
\end{corollary}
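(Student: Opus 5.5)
Each equality is proved as two inequalities. One of them is immediate from the definitions, and the other follows directly from Lemma~\ref{lemma:inclusive_to_exclusive}.

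\textbf{The easy direction.} An exclusive split $(v,A,B)$ with $A\cap B=\emptyset$ is in particular a split as defined in the preliminaries, since the only requirement there is $A\cup B=N(v)$. So every exclusive split sequence is also an admissible split sequence. Taking an optimal exclusive sequence for each target class gives
\[
ChVS(G)\le ChVXS(G),\qquad IVS(G)\le IVXS(G),\qquad UIVS(G)\le UIVXS(G).
\]
This direction holds for every graph $G$ and does not use triangle-freeness.

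\textbf{The reverse direction.} Let $G$ be triangle-free and let $\sigma$ be a split sequence of minimum length $l$ turning $G$ into a chordal (respectively interval, unit interval) graph. Then $l=ChVS(G)$ (respectively $IVS(G)$, $UIVS(G)$). Lemma~\ref{lemma:inclusive_to_exclusive} applies to $G$ and $\sigma$ and gives a sequence of at most $l$ exclusive splits whose result lies in the same class. By definition of the exclusive parameters, $ChVXS(G)\le l$, and likewise for the other two classes. Combining this with the easy direction yields the three equalities.

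\textbf{Where care is needed.} There is no real obstacle, because all the work is done in Lemma~\ref{lemma:inclusive_to_exclusive}. The one point to check is that the lemma is invoked with the right target class in each of the three cases. The lemma's proof handles each class through the same monotonicity observation:
\begin{itemize}
\item triangle-free chordal graphs are exactly the forests;
\item triangle-free interval graphs are exactly the disjoint unions of caterpillars;
\item triangle-free unit interval graphs are exactly the disjoint unions of paths.
\end{itemize}
Each of these classes is closed under taking subgraphs. The modified sequence only removes edges relative to $\sigma$, so its result stays in the correct class in every case.

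We should also make sure the minima are well defined, i.e., that a finite split sequence into each class always exists. For the exclusive variant this is clear: splitting each vertex exclusively until every copy has degree at most one produces a matching, which is a disjoint union of paths. A disjoint union of paths is chordal, interval and unit interval simultaneously. Hence all six parameters are finite and the equalities are meaningful.
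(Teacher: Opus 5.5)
Your proposal is correct and follows the paper's proof: the inequalities $ChVS(G)\le ChVXS(G)$, $IVS(G)\le IVXS(G)$ and $UIVS(G)\le UIVXS(G)$ hold for every graph, because exclusive splits are splits, and the reverse inequalities for triangle-free $G$ follow directly from Lemma~\ref{lemma:inclusive_to_exclusive}. Your remark that all six parameters are finite is a harmless addition that the paper leaves implicit.
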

\begin{proof}
    In general, for any graph $G$, we have $ChVS(G) \leq ChVXS(G)$, $IVS(G) \leq IVXS(G)$ and $UIVS(G) \leq UIVXS(G)$.
    Because of the previous Lemma, we deduce the equalities for triangle-free graphs.
\end{proof}

\begin{theorem}\label{thm:nphardness}
Interval Graph Exclusive Vertex Splitting is \NP-complete, even restricted to planar bipartite subcubic graphs.
\end{theorem}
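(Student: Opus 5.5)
Membership in \NP{} is immediate: a certificate is the sequence of at most $k$ exclusive splits, each given by a vertex and a bipartition of its current neighborhood. We apply it and test the result for being an interval graph in linear time \cite{booth1976testing}. For hardness we work with bipartite, hence triangle-free, inputs. A triangle-free graph is an interval graph if and only if it is a disjoint union of caterpillars. An exclusive split never creates a triangle, so the target is a forest all of whose components are caterpillars. By Lemma~\ref{lemma: t2free = caterpillar} this is the same as being acyclic and $T_2$-free.

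The problem therefore asks for at most $k$ exclusive splits that destroy every cycle and every induced $T_2$. On a subcubic graph an exclusive split of a degree-$3$ vertex can be assumed to separate one neighbor from the other two, which deletes one edge and adds one vertex. A split of a degree-$2$ vertex is just an edge deletion plus a new vertex. So, up to bookkeeping on the number of components, splitting into caterpillars is an edge-deletion-type problem. We first count the splits needed to break all cycles: for a connected graph this is at least $|E|-|V|+1$, and each split raises $|V|$ by one while keeping $|E|$. We then ask whether the forest obtained can avoid $T_2$.

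I would reduce from a problem that is \NP-hard on planar cubic or subcubic graphs, the natural candidate being \textsc{Hamiltonian Path} in planar cubic (or bipartite subcubic) graphs. The intuition is that a spanning tree with no degree-$3$ vertex is a path, which is a caterpillar. More robustly, I would attach gadgets so that caterpillar spines are forced to behave like a path. Concretely, given a planar cubic bipartite $H$, I would subdivide every edge an even number of times to keep bipartiteness. I would then attach to each original vertex a pendant structure, for instance a path of length $2$, so that any vertex keeping degree $3$ in the final forest with two non-leaf branches creates a $T_2$.

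Then I would set $k=|E'|-|V'|+1$, the exact cycle-rank budget. With this budget, splits must produce a single tree, a ``spanning tree'' of the modified graph in the split sense. I would show that this tree is a caterpillar exactly when the chosen kept edges of $H$ form a Hamiltonian path. Planarity, bipartiteness and maximum degree $3$ are preserved by subdivision and pendant attachments, and subcubicity needs care at original vertices that already have degree $3$; a pendant path may require first splitting a vertex into a small gadget.

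The main obstacle is the tight accounting argument for the converse direction. We must show that a solution with $k$ splits cannot waste splits on the pendant gadgets or on subdivision vertices, and cannot separate a $T_2$ cheaply instead of following the Hamiltonian structure. This means arguing that every split beyond cycle-breaking is unaffordable, so that the result is one tree. We must also show that every caterpillar arising this way forces degree $\le 2$ on the spine at original vertices. I would first try to design the gadgets so that any degree-$3$ original vertex with two surviving non-pendant directions immediately yields an induced $T_2$ through its pendant path, and then carefully verify the subcubic constraint.
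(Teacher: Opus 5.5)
\textbf{There is a genuine gap: your concrete construction breaks the reduction, and the converse direction is never supplied.}

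Your high-level plan matches the paper's. You reduce from \textsc{Hamiltonian Path} on planar cubic graphs, subdivide, and set $k=|E'|-|V'|+1$. Exclusive splits preserve $|E|$ and the target must be a forest, so the result is forced to be a single tree, hence a caterpillar. Your membership argument is also fine. The problems are the gadgets and the subdivision choice:
\begin{itemize}
\item \emph{Pendants of length $2$ turn every instance into a NO-instance.} Under the exact cycle-rank budget, no vertex of a pendant $v-p_1-p_2$ can be split. Likewise, the copy of $v$ adjacent to $p_1$ must have another neighbour. Otherwise the result is disconnected. So in the final tree $p_1$ is a non-leaf whose only non-leaf neighbour is a copy of $v$, i.e.\ an end of the spine. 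With $n\ge 4$ original vertices this is impossible. Concretely, an internal vertex of a Hamiltonian path has two long branches along the path, and the pendant is a third, giving an induced $T_2$.
\item \emph{Pendants break subcubicity.} They raise original vertices to degree $4$.
\item \emph{Even subdivision ($\ge 2$ per edge) has the same defect.} Cutting a non-path edge with one split leaves a pendant path of length $\ge 2$ at one of its endpoints. Its inner vertex is again a spine end, and there are $n/2+1\ge 3$ non-path edges.
\item \emph{A split is not an edge deletion here.} The detached copy remains as a leaf and lengthens its neighbour's branch, which is exactly what causes the previous defect.
\end{itemize}

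\textbf{The missing idea: no gadget is needed; subdivide every edge of the cubic planar graph exactly once.} This alone gives a planar bipartite subcubic graph, so you need no bipartite source problem. The forward direction then uses exactly $n/2+1$ splits, one per non-path edge. Each splits that edge's unique subdivision vertex into two leaves, which hang harmlessly on spine vertices. The $T_2$ obstruction is already intrinsic: an original vertex keeping all three subdivided edges has three branches of length $\ge 2$.

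\textbf{You also defer the converse, which is where the work lies.} The paper argues as follows once the result is a caterpillar:
\begin{enumerate}
\item Every neighbour of (a copy of) an original vertex $u_i$ is a subdivision-vertex copy of degree at most $2$.
\item Hence a copy of $u_i$ cannot be a leaf hanging from an internal spine vertex.
\item After extending the spine at its two ends, all original vertices lie on it.
\item Contracting the subdivision vertices along the spine yields a Hamiltonian path of $G$.
\end{enumerate}
Your proposal needs an argument of this kind rather than a gadget intended to force it.
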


\begin{proof}
The proof is a reduction from the {\sc Hamiltonian Path} problem on cubic planar graphs, which is known to be \NP-complete~\cite{Garey1976}.

Let $G$ be a cubic planar graph on $n$ vertices $v_1, \dots, v_n$, and let $G'$ be the graph obtained by subdividing every edge of $G$ exactly once. Formally, build $G'$ as follows:

\begin{itemize}

\item For each vertex $v_i \in V(G)$, we add a vertex $u_i$ in $V(G')$.
        
\item for each edge $(v_iv_j) \in E(G)$, with $i<j$, we add a vertex $u_{ij}$ in $V(G')$ and the two edges $(u_iu_{ij})$ and $(u_{ij}u_j)$.
    
\end{itemize}

The construction is depicted in Figure~\ref{fig:construction-NP-hardness}

\begin{figure}[h]
\centering

\begin{subfigure}{0.4\textwidth}
\centering
\hspace{-3cm} \begin{tikzpicture}[scale=1,
  every node/.style={circle, draw, inner sep=2pt, label distance=.5pt}]

\node[label=above:$v_1$] (v1) at (0,2) {};
\node[label=right:$v_2$] (v2) at (1.7,1) {};
\node[label=right:$v_3$] (v3) at (1.7,-1) {};
\node[label=below:$v_4$] (v4) at (0,-2) {};
\node[label=left:$v_5$] (v5) at (-1.7,-1) {};
\node[label=left:$v_6$] (v6) at (-1.7,1) {};

\draw (v1)--(v2)--(v3)--(v4)--(v5)--(v6)--(v1);

\draw (v1)--(v3);
\draw (v4)--(v6);
\draw (v2) to[out=100, in=150, looseness=2.5] (v5);

\end{tikzpicture}
\caption{The graph $G$.}
\end{subfigure} 
\begin{subfigure}{0.4\textwidth}
\centering
\begin{tikzpicture}[scale=1,
  every node/.style={circle, draw, inner sep=2pt}]
  \useasboundingbox (-3,-3) rectangle (3,3);

\node[label=above:$u_1$] (u1) at (0,2) {};
\node[label=right:$u_2$] (u2) at (1.7,1) {};
\node[label=right:$u_3$] (u3) at (1.7,-1) {};
\node[label=below:$u_4$] (u4) at (0,-2) {};
\node[label=left:$u_5$] (u5) at (-1.7,-1) {};
\node[label=left:$u_6$] (u6) at (-1.7,1) {};

\node[label=above:$u_{1,2}$] (u12) at (0.85,1.5) {};
\node[label=right:$u_{2,3}$] (u23) at (1.7,0) {};
\node[label=below:$u_{3,4}$] (u34) at (0.85,-1.5) {};
\node[label=below:$u_{4,5}$] (u45) at (-0.85,-1.5) {};
\node[label=left:$u_{5,6}$] (u56) at (-1.7,0) {};
\node[label=above:$u_{6,1}$] (u61) at (-0.85,1.5) {};
\node[label=above:$u_{1,3}$] (u13) at (0.85,.5) {};
\node[label=above:$u_{4,6}$] (u46) at (-0.85,-.5) {};

\draw (u1)--(u12)--(u2);
\draw (u2)--(u23)--(u3);
\draw (u3)--(u34)--(u4);
\draw (u4)--(u45)--(u5);
\draw (u5)--(u56)--(u6);
\draw (u6)--(u61)--(u1);
\draw (u1)--(u13)--(u3);
\draw (u6)--(u46)--(u4);

\path
  (v2) to[out=100, in=150, looseness=2.5]
  node[pos=0.5, circle, draw, inner sep=2pt,
       label=above:$u_{2,5}$] (u25) {}
  (v5);
\draw (u2) to[out=100, in=30, looseness=1.5] (u25);
\draw (u25) to[out=210, in=150, looseness=1.5] (u5);

\end{tikzpicture}

\caption{The resulting graph $G'$.}
\end{subfigure}
\caption{A planar cubic graph $G$ and the resulting graph $G'$ of the reduction.}
\label{fig:construction-NP-hardness}
\end{figure}
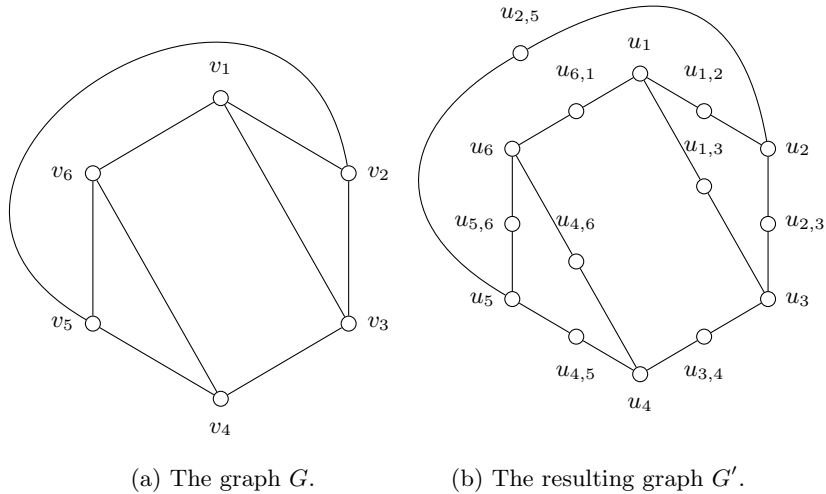
    
Since $G$ is cubic, $|E(G)| = \frac{3}{2}n$, and thus $|V(G')| = \frac{5}{2}n$, and $|E(G')| = 3n$. Note that the vertices of $G'$ can be partitioned into two independent sets $A = \{u_i\}_{1\le i \le n}$ and $B = \{u_{ij}|(v_iv_j) \in G\}$. Thus, $G'$ is planar, bipartite and subcubic.
    
Let $k = \frac{n}{2} +1$. We prove that an Hamiltonian path of $G$ exists if, and only if, $G'$ can be transformed into an interval graphs in $k$ exclusive splits.

Suppose first that there exists a Hamiltonian path in $G$. Up to a relabeling of the vertices of $G$ and $G'$, suppose that this path is $(v_1, v_2, \dots, v_n)$. Consider the following $k$ exclusive splits of $G'$:
For each vertex $u_{i,j}$ with $i \neq j+1$, we split the vertex $u_{i,j}$ into $v_i$ and $v_j$ with $v_i$ adjacent to $u_i$ and $v_j$ adjacent to $u_j$. 
Since $n-1$ vertices $u_{i,j}$ satisfying $j = i+1$ exist, and there is in total $\frac{3}{2}n$ vertices $u_{i,j}$, we performed $\frac{3}{2}n - (n-1) =  \frac{n}{2} + 1$ exclusive splits. The resulting graph is depicted in Figure~\ref{fig:splitting-NP-hardness}
 
By construction, for each $1 < i < n $, exactly one neighbor of $u_i$ has been split, and two neighbors of each of $u_1$ and $u_n$ have been split. Each of these exclusive splits have transformed vertices of degree $2$ into two leaves. Moreover, for $1 \le i \le n$, the vertex $u_{i}$ has not been split. Thus, $\{u_1, u_{1,2}, u_2, \dots, u_{n-1}, u_{n-1,n}, u_n\}$ is a ``dominating'' path of the resulting graph (in the sense that any other vertex has a neighbor in the path). 
Finally, the resulting graph is a caterpillar, which is an interval graph.

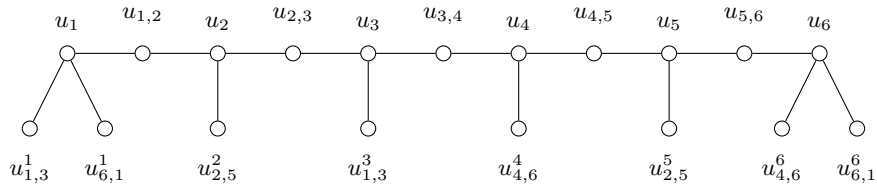
\begin{figure}
    \centering

\begin{tikzpicture}[scale=1,
  every node/.style={circle, draw, inner sep=2pt}]

\node[label=above:$u_1$] (u1) at (0,0) {};
\node[label=above:$u_{1,2}$] (u12) at (1,0) {};
\node[label=above:$u_2$] (u2) at (2,0) {};
\node[label=above:$u_{2,3}$] (u23) at (3,0) {};
\node[label=above:$u_3$] (u3) at (4,0) {};
\node[label=above:$u_{3,4}$] (u34) at (5,0) {};
\node[label=above:$u_4$] (u4) at (6,0) {};
\node[label=above:$u_{4,5}$] (u45) at (7,0) {};
\node[label=above:$u_5$] (u5) at (8,0) {};
\node[label=above:$u_{5,6}$] (u56) at (9,0) {};
\node[label=above:$u_6$] (u6) at (10,0) {};

\node[label=below:$u^1_{1,3}$] (u131) at (-.5,-1) {};
\node[label=below:$u^3_{1,3}$] (u133) at (4,-1) {};
\node[label=below:$u^4_{4,6}$] (u464) at (6,-1) {};
\node[label=below:$u^6_{4,6}$] (u466) at (9.5,-1) {};
\node[label=below:$u^1_{6,1}$] (u161) at (.5,-1) {};
\node[label=below:$u^6_{6,1}$] (u166) at (10.5,-1) {};
\node[label=below:$u^2_{2,5}$] (u252) at (2,-1) {};
\node[label=below:$u^5_{2,5}$] (u255) at (8,-1) {};

\draw (u1)--(u12)--(u2)--(u23)--(u3)--(u34)--(u4)--(u45)--(u5)--(u56)--(u6);

\draw (u131)--(u1)--(u161);
\draw (u466)--(u6)--(u166);

\draw (u252)--(u2);
\draw (u133)--(u3);
\draw (u464)--(u4);
\draw (u255)--(u5);

\end{tikzpicture}
    
    \caption{The resulting graph after the flipping sequence.}
    \label{fig:splitting-NP-hardness}
\end{figure}

Conversely, suppose that $G'$ can be turned into an interval graph $T$ after the application of $k$ exclusive splits.
First, note that, since $G'$ is bipartite, all of its cycles have even length. And observing that any cycle resulting from a split must be an induced cycle of the original graph, we deduce that $T$ has to be a forest.
Moreover, since the exclusive splitting of a vertex does not change the number of edges, but increases the number of vertices by exactly one, and since $|V(G')| + k = |E(G')| +1$, which implies that $|E(T)| = |V(T)| -1$, we conclude that $T$ must be a tree as it contains no cycles. Hence, using Lemma~\ref{lemma: t2free = caterpillar}, $T$ is a caterpillar. 

Now let $P = (w_1, \dots, w_m)$ be a dominating path of the caterpillar.
First, we note that no leaf attached to an internal node of $P$ can be a vertex $u_i$. Indeed, all the neighbors of $u_i$ have degree $2$, and thus, if $u_i$ is a leaf, its neighbor in $P$ can only have one other neighbor. So it is not an internal node of $P$. Up to a change of $P$ into $P'$ by adding the $u_i$s that are leaves attached to the extremities of $P$, we can assume that all the $u_i$s are in $P$. Now, by contracting all the vertices $u_{i,j}$ that are internal vertices of $P$, we obtain a path containing all the $u_i$s, which corresponds to a Hamiltonian path of $G$.
\end{proof}

\begin{corollary}
Splitting into Interval Graphs is \NP-complete even when restricted to planar bipartite subcubic graphs.
\end{corollary}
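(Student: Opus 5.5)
The plan is to transfer the hardness of Theorem~\ref{thm:nphardness} from the exclusive to the general (inclusive) variant using Corollary~\ref{corollary:triangle_free_graphs}, and to settle membership in \NP{} separately.

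\textbf{Membership in \NP.} A certificate is a sequence of at most $k$ splits, each given by a vertex $v$ and the two sets $A,B$ with $A\cup B=N(v)$. We may assume $k<|V(G)|+|E(G)|$. Indeed, splitting every vertex into pieces of degree at most one, one piece per incident edge, uses at most $\sum_v \deg(v)=2|E(G)|$ splits and yields a perfect matching plus isolated vertices, which is an interval graph. So we can cap $k$ at $2|E(G)|$, and the certificate has polynomial size. Applying the splits and then running the linear-time interval recognition algorithm~\cite{booth1976testing} verifies the certificate in polynomial time.

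\textbf{Hardness.} Take the instance $(G',k)$ produced in the proof of Theorem~\ref{thm:nphardness}. The graph $G'$ is bipartite, so it is triangle-free. Corollary~\ref{corollary:triangle_free_graphs} then gives $IVS(G')=IVXS(G')$. Hence $G'$ can be made an interval graph with at most $k$ arbitrary splits if and only if it can with at most $k$ exclusive splits. This is exactly the condition shown in Theorem~\ref{thm:nphardness} to be equivalent to a Hamiltonian path in the cubic planar graph $G$.

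One point needs care. The theorem's argument is phrased as ``exactly $k$ splits'', but its converse direction used $|V(G')|+k=|E(G')|+1$ to force a tree. If fewer than $k$ splits are allowed, that count instead gives a forest with more edges than a tree can have, a contradiction. So any solution uses at least $k$ splits, and ``at most $k$'' is equivalent to ``exactly $k$''. The same reduction therefore works verbatim. Since $G'$ is planar, bipartite and subcubic, hardness holds on that class.

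The main obstacle is simply making sure the corollary applies. The statement of Lemma~\ref{lemma:inclusive_to_exclusive} bounds the length by at most $l$, which is exactly the direction we need. The only hypothesis to check is triangle-freeness, and it is immediate from bipartiteness, so I expect no real difficulty.
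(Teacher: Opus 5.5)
Your proposal is correct and takes the same route as the paper. The paper's proof is the single observation that $G'$ is bipartite, hence triangle-free, so Corollary~\ref{corollary:triangle_free_graphs} gives $IVS(G')=IVXS(G')$ and the reduction of Theorem~\ref{thm:nphardness} carries over unchanged.

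Your two additions are sound details the paper leaves implicit: membership in \NP{} by capping $k$ at $2|E(G)|$, and the edge count showing that ``at most $k$'' exclusive splits forces exactly $k$. One small inconsistency: your argument justifies the $2|E(G)|$ cap, not the stated $k<|V(G)|+|E(G)|$. The $2|E(G)|$ cap is all you need.
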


\begin{proof}
As bipartite graphs are triangle-free graphs, this follows from  Corollary~\ref{corollary:triangle_free_graphs}.
\end{proof}

\section{Splitting into a Disjoint Union of Paths}

Since splitting into a caterpillar graph is already \NP-hard, it seems natural to look at splitting into a disjoint union of paths. In the sequel, we denote by $\up$ the class of graphs consisting of disjoint unions of paths. 

A trail of length $k$ in a graph is a set of $k$ different edges $\{e_1, \dots e_k\}$, such that $|e_i \cap e_{i+1}| = 1$ for all $1 \le i \le k-1$.
For a graph $G$, we denote by $tp(G)$ the minimum number of trails required to partition the edge set of $G$ into disjoint paths. The following Lemma, relating $tp(G)$ to the number of odd-degree vertices in $G$ could be of interest by itself. 

\begin{lemma}\label{lemma trail partition of a graph}
Let $G$ be a connected graph with $2p$ vertices of odd degree. Then $tp(G) = p$ if $p\ge 1$ and $tp(G) = 1$ if $p = 0$.
If $G$ is not connected, then we have $tp(G) = \underset{C \in cc(G)}{\sum} tp(C)$, where $cc(G)$ is the set of connected component of $G$.
\end{lemma}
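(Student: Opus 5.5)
We prove the connected case by a lower bound and a matching upper bound, both resting on the classical Euler argument; the disconnected case then follows immediately. Throughout, $tp(G)$ counts trails in an edge partition; the word ``paths'' in the definition only describes the trails themselves, not a stronger requirement.

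\textbf{Lower bound.} Take any partition of $E(G)$ into trails $T_1,\dots,T_q$. Fix a vertex $x$. Each trail contributes an even number of edge-ends at $x$ unless $x$ is one of its two endpoints. This holds when the trail is open; for a closed trail, every vertex gets an even count. So if $x$ has odd degree, $x$ must be an endpoint of at least one open trail. Each trail has at most two endpoints, and there are $2p$ odd vertices, so $2q \ge 2p$, that is, $q\ge p$. If $p=0$ and $G$ has at least one edge, trivially $q\ge 1$. (We assume $G$ has an edge; a graph with no edges would have $tp=0$, and we read the statement as excluding this degenerate case.)

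\textbf{Upper bound for $p=0$.} Every degree is even and $G$ is connected, so by Euler's theorem $G$ has an Eulerian circuit. That circuit is a single trail covering every edge, hence $tp(G)=1$.

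\textbf{Upper bound for $p\ge 1$.} Pair the odd vertices arbitrarily as $(a_1,b_1),\dots,(a_p,b_p)$. Add $p$ new auxiliary vertices $z_1,\dots,z_p$, and make each $z_i$ adjacent to both $a_i$ and $b_i$. Subdividing in this way avoids multi-edges and keeps the graph simple. The new graph $H$ is connected and every vertex in it has even degree, so it has an Eulerian circuit $C$.

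Now delete the $p$ vertices $z_i$. Each $z_i$ appears exactly once on $C$, as a visit $a_i z_i b_i$ or its reverse, so removing it cuts the circuit once. Cutting a closed circuit at $p\ge 1$ places yields exactly $p$ open trails. These trails partition $E(G)$. Because every cut occurs at a distinct $z_i$, none of the resulting pieces is empty. Therefore $tp(G)\le p$, and together with the lower bound, $tp(G)=p$.

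\textbf{Disconnected case.} No trail can contain edges from two different components, since consecutive edges of a trail share a vertex. So any trail partition of $G$ splits into trail partitions of the components, which gives $tp(G)\ge\sum_C tp(C)$. Conversely, taking the union of optimal partitions of the components gives a trail partition of $G$, which gives equality. Components with no edges contribute $0$ here, consistent with the convention above.

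The only delicate point is the edge-case bookkeeping: a trivial component contributes $0$, not $1$, and we must confirm that the cut pieces in the $p\ge1$ construction are genuine nonempty trails. The rest is standard Euler theory.
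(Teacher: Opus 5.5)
Your proof is correct, but it takes a different route from the paper. The paper argues only the upper bound, by induction on $p$. It removes a trail $P$ between two odd-degree vertices and splices into $P$ an Euler tour of every all-even component that the removal creates, since each such component would otherwise cost an extra trail. It then applies the induction hypothesis to $G\setminus P$ through additivity over components. The matching lower bound $tp(G)\ge p$ is left implicit there.

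You instead state the parity lower bound explicitly. You then build all $p$ trails in one shot: attach a degree-$2$ vertex $z_i$ to each pair $\{a_i,b_i\}$, take a single Eulerian circuit of the augmented graph, and cut it at the $z_i$. This avoids the splicing bookkeeping and the passage through disconnected intermediate graphs. It also gives a linear-time procedure for the trail partition, which beats the $O(|E(G)||V(G)|)$ bound invoked in the paper's corollary. The paper's peeling argument has the modest advantage of working inside $G$ without auxiliary vertices.

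Three small points:
\begin{itemize}
\item Nonemptiness of the cut pieces really comes from the pairs being pairwise disjoint with $a_i\neq b_i$: consecutive cuts are then separated by a walk between two distinct vertices. In any case nonemptiness is not needed for $tp(G)\le p$.
\item Your lower bound relies on reading \emph{trail} as a walk without repeated edges, and this reading is necessary. Under the paper's literal definition, where consecutive edges merely share a vertex, $K_{1,3}$ would be a single trail even though $p=2$.
\item Your caveat about edgeless graphs is a legitimate correction to the statement as written.
\end{itemize}
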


\begin{proof}
The fact that $tp(G) = \underset{C \in cc(G)}{\sum} tp(C)$, where $cc(G)$ is the set of connected component of $G$ is straightforward. The proof is by induction on $p$. If $p=0$ ($p=1$ resp.), $G$ is Eulerian (semi-Eulerian resp.). Thus, it contains an Eulerian cycle (trail resp.) which is a trail that covers all its edges. Thus, its edges can be covered by a single trail.

If $p\ge 2$, let $u,v$ be two vertices of odd degree in the same connected component $C$ of $G$. If $C$ contains only two vertices of odd degree, we can cover all the edges of $C$ with an eulerian trail $P$. Otherwise, consider any trail $P$ from $u$ to $v$. If removing of the edges forming $P$ disconnects $C$, and we have an Eulerian connected component $C'$ of $C \setminus P$, then we take an Eulerian tour of $C'$ and insert it in $P$ simply as follows: let $x$ be a vertex of $P$ which is in $C'$. We combine the tour starting and ending at $x$ with $P$ to obtain a longer trail. We do the same for every (possible) Eulerian componet of $C\setminus P$. When this process ends, we add $P$ in the partition of the edges.
After that, $G\setminus P$ has $2$ vertices less of odd degree (the endpoints of $P$, since each loses exactly one edge), and no new connected component with only even degree vertices, which provides the result by induction.
\end{proof}

\begin{theorem}\label{thm value splitting into union of paths}
Let $G=(V,E)$ be a graph. The minimum number of splits needed to transform $G$
into a disjoint union of paths is
\[
\sum_{v\in V}\left(\left\lceil \frac{d(v)}{2}\right\rceil-1\right)+r,
\]
where $r$ is the number of connected components of $G$ that contain no vertex of
odd degree.
\end{theorem}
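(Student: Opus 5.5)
We prove matching upper and lower bounds. The idea is to relate a split sequence ending in a union of paths to a partition of $E(G)$ into trails, and then to count with Lemma~\ref{lemma trail partition of a graph}. The key invariant is that a split never creates or destroys edges, up to duplication in the inclusive case. Isolated vertices count as trivial paths. Throughout, write $S=\sum_{v}(\lceil d(v)/2\rceil-1)$. Vertices of degree $0$ should be excluded from this sum, or handled separately, since they are already paths; the statement implicitly assumes $d(v)\ge 1$, or one treats an isolated vertex as a component with no odd vertex contributing $r$. I would check and fix this convention first.

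\textbf{Lower bound.} First I reduce to exclusive splits. Suppose an inclusive split sequence yields a union of paths. Replace each inclusive split $(v,A,B)$ by the exclusive split $(v,A,B\setminus A)$, propagating to descendants as in the proof of Lemma~\ref{lemma:inclusive_to_exclusive}. This yields a spanning subgraph of the original result, obtained by deleting some edges, and a subgraph of a union of paths is again a union of paths. This argument needs no triangle-freeness, because $\up$ is monotone.

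So assume every split is exclusive. Then the final graph $H$ has the same edge multiset as $G$, and each original vertex $v$ corresponds to a set $D(v)$ of copies of size $1+s_v$, where $s_v$ is the number of splits performed on $v$ or its descendants. Each copy has degree at most $2$ in $H$, so $|D(v)|\ge\lceil d(v)/2\rceil$. Each path component of $H$ maps to a trail in $G$, so the nontrivial components of $H$ give a trail partition of $E(G)$.

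Now fix a connected component $C$ of $G$ with no odd-degree vertex and suppose $s_v=\lceil d(v)/2\rceil-1$ for every $v\in C$. Then every copy of every vertex has degree exactly $2$. But paths have endpoints of degree $1$, so $H[D(C)]$ would have no edges, a contradiction unless $C$ is a single vertex. A single vertex is already a path, which clashes with $r$, and this is again the convention issue above. Hence each such component needs at least one extra split, and the lower bound is $S+r$.

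\textbf{Upper bound.} Take a trail partition of $E(G)$. By Lemma~\ref{lemma trail partition of a graph}, each component with $2p\ge 2$ odd vertices uses $p$ trails, and each even component uses one closed trail. The goal is to split $G$ so that each trail becomes a path.

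For a vertex $v$, the trails pass through $v$ some number of times, say $a$ times as an interior visit (using two edges) and $b$ times as an endpoint (using one edge), with $2a+b=d(v)$. Create one copy of $v$ per visit, attaching the corresponding one or two edges. This uses $a+b-1$ exclusive splits at $v$. A closed trail additionally needs one endpoint pair somewhere.

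The main obstacle is to show that the trails can be chosen so that $a+b=\lceil d(v)/2\rceil$ at every vertex, with the single exception of one extra copy per even component. This holds if every even-degree vertex is never a trail endpoint and every odd-degree vertex is an endpoint exactly once. The standard construction gives exactly this: pair the odd vertices of each component, add a virtual edge for each pair, take an Eulerian circuit, and cut it at the virtual edges. For an even component, cut the Eulerian circuit at one vertex, which yields one extra split.

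A trail with repeated vertices becomes a path after splitting, since every visit gets its own copy. Summing the splits then gives exactly $S+r$.
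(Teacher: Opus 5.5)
Your proposal is correct and follows essentially the paper's proof. It uses the same descendant-degree lower bound $|D(v)|\ge\lceil d(v)/2\rceil$, the same observation that equality on an even component forces every copy to have degree $2$ (hence cycles), and the same upper bound giving each trail visit its own copy, with your virtual-edge Eulerian construction only making explicit the endpoint condition the paper asserts. Your hesitation about isolated vertices is unnecessary: such a vertex contributes $\lceil 0/2\rceil-1=-1$ to the sum and $+1$ to $r$, so the formula is correct as stated (whereas dropping it from the sum while keeping it in $r$ would be wrong).
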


\begin{proof}
Since vertex splitting applies independently to connected components, and $\mathcal{P}$ is closed
under disjoint union, it suffices to prove the statement for a connected graph, and then sum over components. So assume $G$ is connected. Then $r\in\{0,1\}$.

Let $\sigma$ be a split sequence transforming $G$ into $H\in\mathcal{P}$.
Consider an original vertex $v\in V(G)$ and let $D(v)$ be the set of vertices of
$H$ that originate from $v$ (the descendants of $v$ after applying $\sigma$).
Every edge incident to $v$ in $G$ remains present in $H$ and is incident to
exactly one vertex of $D(v)$; therefore
\[
\sum_{x\in D(v)} d_H(x) = d_G(v).
\]
Since $H\in\mathcal{P}$, every vertex of $H$ has degree at most $2$, so
$\sum_{x\in D(v)} d_H(x) \le 2|D(v)|$. Hence $d_G(v)\le 2|D(v)|$, i.e.
$|D(v)|\ge \lceil d_G(v)/2\rceil$.
Each split increases the total number of vertices by $1$, and $|D(v)|=1$ before
any split on descendants of $v$. Thus at least $|D(v)|-1\ge \lceil d_G(v)/2\rceil-1$
splits are necessary “because of $v$”. Summing over all $v\in V(G)$ yields the
unconditional lower bound
\[
|\sigma|\ \ge\ \sum_{v\in V(G)}\left(\left\lceil\frac{d_G(v)}{2}\right\rceil-1\right).
\tag{$\ast$}
\]

If $r=0$, we are done with the lower bound. Assume now $r=1$, i.e. $G$ has no odd-degree vertices (so $G$ is Eulerian).
If equality holds in $(\ast)$, then for every $v$ we must have
$|D(v)|=\lceil d_G(v)/2\rceil$ and every descendant has degree exactly $2$
(because the degree sum over $D(v)$ is exactly $d_G(v)$ and the maximum is $2$).
Consequently, every vertex of $H$ has degree exactly $2$, so each connected
component of $H$ is a cycle. In particular, $H\notin \mathcal{P}$ unless $E=\emptyset$.
Therefore, in the Eulerian connected case, at least one additional split is
necessary beyond $(\ast)$, giving the lower bound
\[
|\sigma|\ \ge\ \sum_{v\in V(G)}\left(\left\lceil\frac{d_G(v)}{2}\right\rceil-1\right)+1.
\]

To show the upper bound, let $T=\{P_1,\dots,P_k\}$ be a partition of $E(G)$ into trails (as in the
definition of $tp(G)$). We build a split sequence that realizes these trails as
vertex-disjoint paths.

Fix a vertex $v\in V(G)$ and let the trails of $T$ use $v$ a total of $t(v)$ times,
counting occurrences (so an internal occurrence contributes $2$ incident edges
of the trail at $v$, and an endpoint occurrence contributes $1$ incident edge).
Then the $d_G(v)$ incident edges of $v$ are grouped into $t(v)$ blocks, each block
having size $2$ except possibly one block of size $1$ if $v$ is an endpoint of a
trail. Hence $t(v)\ge \lceil d_G(v)/2\rceil$ for all $v$.

Now perform splits so that $v$ is replaced by exactly $t(v)$ descendants, one
descendant for each block, adjacent precisely to the edges of that block. After
this, each descendant has degree $2$ (for internal blocks) or degree $1$ (for
endpoint blocks). Doing this simultaneously for all vertices makes each trail
$P_i$ become an actual path component in the resulting graph, and different
trails become vertex-disjoint. Therefore the resulting graph lies in $\mathcal{P}$.

The number of splits used “at $v$” is exactly $t(v)-1$. Therefore the total number
of splits is
\[
\sum_{v\in V(G)}(t(v)-1).
\]
If $r=0$ (there are odd-degree vertices), choose $T$ so that each odd-degree vertex
appears as an endpoint exactly once overall; then each $t(v)$ can be taken equal
to $\lceil d_G(v)/2\rceil$, yielding exactly
$\sum_{v}(\lceil d_G(v)/2\rceil-1)$ splits.

If $r=1$ (Eulerian connected), any trail partition must contain at least one trail,
hence it must have two endpoints. This forces exactly one vertex to contribute an
endpoint block (size $1$), which increases $\sum_v t(v)$ by exactly $1$ compared to
the ideal pairing into blocks of size $2$ everywhere. Equivalently, we can realize
all vertices with $t(v)=d_G(v)/2$ except at one chosen vertex where we take
$t(v)=d_G(v)/2+1$. This yields exactly
$\sum_v(\lceil d_G(v)/2\rceil-1)+1$ splits.

Thus, in both cases, the stated number of splits is achievable. Combined with the
lower bounds, this proves optimality for the case where $G$ is a connected graph. Summing over all connected
components gives the general formula with the additional $+r$.
\end{proof}

Note that the construction in the proof of
Theorem~\ref{thm value splitting into union of paths} yields a polynomial-time
algorithm that outputs an explicit splitting sequence transforming any graph
into a disjoint union of paths.

\begin{corollary}
The minimum number of splits required to transform a graph into a disjoint union of paths, and the corresponding split sequence, can be computed in time $O(|E(G)||V(G)|)$.
\end{corollary}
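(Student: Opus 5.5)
The plan is to turn the constructive proof of Theorem~\ref{thm value splitting into union of paths} into an explicit algorithm and bound each phase by $O(|E(G)||V(G)|)$.

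\textbf{Computing the number.} Compute all degrees in $O(|V|+|E|)$ time. Find the connected components by BFS, and record for each component whether it contains an odd-degree vertex; this gives $r$. The formula of Theorem~\ref{thm value splitting into union of paths} is then evaluated in linear time.

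\textbf{Computing the trail partition.} Work on each component $C$ separately.
\begin{itemize}
\item If $C$ is Eulerian, compute an Eulerian circuit by Hierholzer's algorithm in $O(|E(C)|)$ time. Cut it at one vertex to obtain a single trail.
\item If $C$ has $2p \ge 2$ odd vertices, do not follow the inductive procedure of Lemma~\ref{lemma trail partition of a graph} literally, since its repeated search and repair steps are what could cost $O(|E||V|)$. Instead, add a new auxiliary vertex adjacent to all $2p$ odd-degree vertices. The augmented component is connected and Eulerian, so compute an Eulerian circuit there. Deleting the auxiliary vertex cuts the circuit into exactly $p$ trails. Each odd vertex is an endpoint of exactly one of these trails, and every other occurrence of a vertex is internal.
\end{itemize}
This achieves $t(v)=\lceil d(v)/2\rceil$ for every vertex in the odd case. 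In the Eulerian case it gives $t(v)=d(v)/2$ everywhere except at the single cut vertex, where $t(v)=d(v)/2+1$. These are exactly the values used in the upper-bound part of the theorem's proof.

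\textbf{Producing the split sequence.} Walk along each trail and, at each occurrence of a vertex $v$, record the block of one or two incident edges used there. For each $v$ with $t(v)$ blocks $B_1,\dots,B_{t(v)}$, emit the following $t(v)-1$ splits. Each split peels one block off the remaining copy of $v$: the new vertex receives the endpoints of $B_i$, and the other copy keeps the rest. All these splits are exclusive and valid, because the neighbor sets partition the current neighborhood.

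\textbf{Running time.} Writing out each split explicitly lists a neighborhood of size at most $d(v)$. Over at most $\lceil d(v)/2\rceil$ splits per vertex this costs $O(d(v)^2)\le O(d(v)|V|)$, which sums to $O(|E||V|)$. All other steps are linear.

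\textbf{Main obstacle.} The delicate point is guaranteeing that the produced trail partition meets the endpoint conditions the theorem's proof requires:
\begin{itemize}
\item each odd vertex is an endpoint exactly once;
\item in the Eulerian case, only one extra endpoint block appears.
\end{itemize}
A trail may revisit the same vertex, and may even start and end at it. I need to check that every internal visit still uses two distinct edges, so that the resulting components are paths and not cycles. This holds because trails use distinct edges and every occurrence of a vertex is realized by its own descendant. The auxiliary-vertex trick handles the endpoint condition cleanly, and it is the step I would verify most carefully.
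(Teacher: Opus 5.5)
Your proposal is correct. It shares the paper's overall structure: evaluate the formula of Theorem~\ref{thm value splitting into union of paths}, build a trail partition, and realize each occurrence of a vertex on a trail as its own descendant. It differs in how the trail partition is obtained and in where the $O(|E(G)||V(G)|)$ cost arises.

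\textbf{The paper's route.} The paper only asserts that the trail partition can be computed in $O(|E(G)||V(G)|)$ time. Implicitly this comes from running the inductive procedure of Lemma~\ref{lemma trail partition of a graph}: up to $|V(G)|/2$ rounds, each extracting a trail between two odd vertices and splicing in Eulerian tours of newly created Eulerian components, at $O(|E(G)|)$ per round.

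\textbf{Your route.} You add one auxiliary vertex adjacent to all odd-degree vertices and take a single Eulerian circuit. This gives the partition in linear time. It also directly certifies the endpoint conditions that the theorem's upper-bound construction needs: each odd vertex is an endpoint exactly once, and in the Eulerian case both ends lie at one cut vertex. The paper only says ``choose $T$ so that\dots'' at that point.

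The condition you flag as your main obstacle is in fact automatic for any partition into exactly $p$ trails. At a vertex $v$, the number of endpoint occurrences has the parity of $d(v)$. So the $2p$ endpoint slots are forced onto the $2p$ odd vertices, one each. This is why the paper's route is also sound.

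In your version the $O(|E(G)||V(G)|)$ term comes only from writing every split with full neighborhood lists. Recording just the peeled block per split would make the whole algorithm linear, so your argument actually yields a stronger bound than stated.

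One detail worth making explicit: each segment obtained by deleting the auxiliary vertex $z$ is a nonempty trail with two distinct odd endpoints. This holds because $G$ is simple, so the circuit cannot enter and leave $z$ through the same edge.
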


\begin{proof}
The value obtained in Theorem~\ref{thm value splitting into union of paths} can be computed in linear time, and since the trail partition can be computed in time $O(|E(G)||V(G)|)$, the split sequence can therefore be obtaind in (the same) polynomial-time.
\end{proof}

\begin{corollary}
If $G$ is a triangle-free graph, the minimum number of (exclusive or inclusive) splits required to transform $G$ into a unit interval graph can be computed in polynomial time.
\end{corollary}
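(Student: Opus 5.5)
The plan is to reduce the statement to Theorem~\ref{thm value splitting into union of paths}, using two facts already available: triangle-free unit interval graphs are exactly disjoint unions of paths, and on triangle-free graphs inclusive and exclusive splitting give the same optimum.

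\textbf{Step 1: identify the target class.} A triangle-free chordal graph is a forest. By the characterization of \cite{wegner1967eigenschaften}, a unit interval graph is claw-free. A claw-free forest has maximum degree at most $2$, so it is a disjoint union of paths. Conversely, every disjoint union of paths is a unit interval graph. Hence the triangle-free unit interval graphs are exactly the graphs of $\up$.

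\textbf{Step 2: show splitting preserves triangle-freeness.} If $G$ is triangle-free, then every graph obtained from $G$ by a sequence of splits is also triangle-free. Indeed, any edge $xy$ of the split graph projects to an edge of $G$ between the ancestors of $x$ and $y$. These ancestors are distinct, because two descendants of the same vertex are never adjacent. So a triangle in the split graph would project onto a triangle of $G$, which is impossible. Consequently, for triangle-free $G$, a split sequence ends in a unit interval graph if and only if it ends in a graph of $\up$.

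\textbf{Step 3: compute the exclusive optimum.} The construction in the proof of Theorem~\ref{thm value splitting into union of paths} partitions the edges at each vertex into blocks, so every split it uses is exclusive. It therefore achieves
\[
\sum_{v\in V}\left(\left\lceil \tfrac{d(v)}{2}\right\rceil-1\right)+r
\]
with exclusive splits. The lower bound in that theorem applies to arbitrary (inclusive) splits, so it applies to exclusive ones as well. Thus $UIVXS(G)$ equals this formula, which is computable in linear time. The explicit sequence is obtained in time $O(|E(G)||V(G)|)$ by the preceding corollary.

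\textbf{Step 4: transfer to the inclusive variant.} For triangle-free $G$, Corollary~\ref{corollary:triangle_free_graphs} gives $UIVS(G)=UIVXS(G)$, so the same value is the inclusive optimum.

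The only point needing care is the inclusive lower bound inside the theorem. There the identity $\sum_{x\in D(v)}d_H(x)=d_G(v)$ holds only for exclusive splits; with inclusive splits the sum is at least $d_G(v)$. The argument survives, however: the bound $d_G(v)\le \sum_{x\in D(v)} d_H(x)\le 2|D(v)|$ still holds. Moreover, the Eulerian extra split is unnecessary to re-derive, since Step 4 already settles the inclusive case via Corollary~\ref{corollary:triangle_free_graphs}. Thus Step 4 sidesteps this issue entirely.
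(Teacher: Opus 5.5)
Your proposal is correct and takes the same route as the paper. It identifies triangle-free unit interval graphs with disjoint unions of paths, gets the exclusive optimum from Theorem~\ref{thm value splitting into union of paths}, and transfers it to inclusive splits via Corollary~\ref{corollary:triangle_free_graphs}. You also make explicit two points the paper leaves implicit: that splitting preserves triangle-freeness, so the target class stays $\mathcal{P}$, and that the theorem's construction and its lower-bound identity concern exclusive splits.
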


\begin{proof}
If $G$ is triangle-free and a unit interval graph, then $G$ is a disjoint union of paths.
According to Theorem~\ref{thm value splitting into union of paths}, Exclusive Vertex Splitting into Unit Interval Graphs is solvable in polynomial-time for triangle free graphs. We then conclude the proof by Corollary~\ref{corollary:triangle_free_graphs} (since we have $UIVS(G) = UIVXS(G)$).
\end{proof}

\section{Splitting versus Deletion}

We now consider the (more general) class of chordal graphs. Our proofs have focused
on the (sub)class of interval graphs, so one may suspect that the problem is also
\NP-hard in this case. However, this remains a conjecture for the time being, and
we therefore pose it as an open question. Note that the proof of Theorem~\ref{thm:nphardness} cannot be applied as triangle free chordal graphs correspond exactly to forests, and splitting into forests is already known to be solvable in polynomial time~\cite{firbas2023establishing}. 

We note that modification into a chordal graph is well studied when the editing
operation is edge or vertex deletion. Of course, vertex splitting can present
additional challenges despite its close relationship with deletion. For example,
if splitting $k$ vertices results in a graph belonging to a hereditary class
$\Pi$, then deleting the same set of vertices has the same effect, but the converse
is not necessarily true. In this section, we study the relationship between vertex
splitting and the two deletion operations when the goal is to obtain a chordal
graph. We denote by $ChED(G)$ and $ChVD(G)$ the minimum number of edge deletions and vertex
deletions, respectively, required to transform $G$ into a chordal graph.

We prove that for any graph $G$, $ChVD(G) \leq ChVS(G) \leq ChED(G)$ and that these bounds are not tight.

We also focused our attention on graphs having $\alpha(G) \leq 2$ because they do not contain any induced cycle of length at least $6$.
Therefore to turn such a graph to a chordal graph, we only have to get rid of cycles of length 4 or 5.

\subsection{Vertex Splitting versus Edge Deletion into Chordal Graphs}


We first consider the relationship with Chordal Edge Deletion which is known to be \NP-complete \cite{natanzon2001complexity}.

\begin{theorem}
    For every graph $G$, $ChVS(G) \leq ChED(G)$. Moreover, $ChVS(G)$ is not lower-bounded by $ChED(G)$ even on graphs having $\alpha(G) \leq 2$.
\end{theorem}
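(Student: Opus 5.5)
The inequality $ChVS(G)\le ChED(G)$ should follow by simulating every edge deletion with one exclusive split. The separation should follow from an explicit family of graphs with $\alpha\le 2$ where a single split suffices but any chordal edge deletion needs $\Omega(m)$ deletions.

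\emph{Upper bound.} Let $F\subseteq E(G)$ be a minimum set of edges such that $G-F$ is chordal, so $|F| = ChED(G)$. I process the edges of $F$ one at a time. For an edge $uv\in F$, let $u'$ be the current descendant of $u$ that carries the edge to $v$. I perform the exclusive split $(u', N(u')\setminus\{v\}, \{v\})$. This creates a copy $u''$ of degree $1$ attached to $v$, and removes $uv$ from the descendant that carries the rest of $u$'s neighbourhood. After $|F|$ splits, the resulting graph is $G-F$ with $|F|$ pendant vertices added. A vertex of degree $1$ lies on no cycle, so every induced cycle of length at least $4$ in the result already lies in $G-F$. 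Hence the result is chordal, and $ChVS(G)\le ChVXS(G)\le ChED(G)$. Alternatively, one can append the pendant vertices to the front of a perfect elimination ordering of $G-F$, since each is simplicial.

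\emph{Separation.} Fix $m\ge 2$. Let $X$ and $Y$ be disjoint cliques of size $m$ with no edges between them. Add two non-adjacent vertices $u,v$, each complete to $X\cup Y$. Its independence number is $2$: any independent set meets each of $X$, $Y$, $\{u,v\}$ in at most one vertex, and $u,v$ are adjacent to everything else. The induced $C_4$'s are exactly the cycles $u\,x\,v\,y$ with $x\in X$ and $y\in Y$.

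\emph{One split suffices.} Split $v$ exclusively into $v_1$ adjacent to $X$ and $v_2$ adjacent to $Y$. Then $X\cup\{u,v_1\}$ is a clique plus two vertices whose neighbourhoods are the cliques $X$ and $X\cup\{v_1\}$, hence chordal. The same holds for $Y\cup\{u,v_2\}$. These two chordal graphs meet only in the single vertex $u$, so their union is chordal, and $ChVS=1$.

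\emph{Edge deletion needs many edges.} Suppose $F$ is a set of edges with $G-F$ chordal. Call $x\in X$ \emph{intact} if none of $ux, vx$ lies in $F$, and define intact $y\in Y$ similarly. Since $uv$ is not an edge and $xy$ is not an edge, an intact $x$ and an intact $y$ would still induce the cycle $u\,x\,v\,y$ in $G-F$. So either every $x\in X$ or every $y\in Y$ loses an incident edge to $\{u,v\}$. These edges are distinct, so $|F|\ge m$. Deleting the $m$ edges $vx$ works, so $ChED=m$, while $ChVS=1$. Letting $m\to\infty$ shows $ChED$ cannot be bounded by any function of $ChVS$, even when $\alpha\le 2$.

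The main obstacle is the choice of this family: vertex-deletion-style examples such as two cliques joined by a matching give no gap, because there $ChVD$ is already large. The key is to make one vertex the common hub of quadratically many $C_4$'s, so that one split separates its neighbourhood into two cliques. The only delicate check is the chordality of the split graph, which I handle by the clique-gluing argument at $u$.
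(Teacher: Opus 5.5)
Your proof is correct and follows the paper's approach. Each deleted edge is simulated by splitting off a pendant copy, and your separating family is the paper's $G_k$ with the middle clique $B$ collapsed to the single vertex $u$; your ``intact vertex'' count plays the role of the paper's $k$ edge-disjoint induced $C_4$'s, and you additionally obtain $ChED=m$ exactly. One harmless slip: $u$ is not adjacent to $v_1$ (since $uv\notin E$), so in $X\cup\{u,v_1\}$ both $u$ and $v_1$ have neighbourhood exactly $X$, which still gives chordality.
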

\begin{proof}

Let $G$ be a graph.
    Suppose we have a sequence of edge deletions turning the $G$ into a chordal graph.
    For every edge deletion $xy$, replace this operation by splitting $x$ into two copies: one which is only connected to $y$ (and becomes therefore a leaf) and the other which is connected to the other neighbors of $x$.
    This sequence of splits on $G$ has the same length as the edge-deletion sequence, and it  turns the graph into a chordal one.
    In fact, each leaf copy is a simplicial vertex and can be removed, or ignored (no effect on the graph's chordality).
    The remaining graph is the same as the graph obtained after the deletion of the edges.
    Then $ChVS(G) \leq ChED(G)$.

    \vspace{5pt}

    \begin{figure}[!h]
    \centering
        \begin{tikzpicture}
            [ node_style/.style={circle, draw, inner sep=0.05cm}, rotate=90  ]
            \def\fscale{3} 
        	 \node[node_style] (0) at ($\fscale*(-0.5, 0)$) {};
        	 \node[node_style] (1) at ($\fscale*(0, 0)$) {};
        	 \node[node_style] (2) at ($\fscale*(-0.5, -0.5)$) {};
        	 \node[node_style] (3) at ($\fscale*(0, -0.5)$) {};
        	 \node[node_style] (4) at ($\fscale*(-0.5, -1)$) {};
        	 \node[node_style] (5) at ($\fscale*(0, -1)$) {};
        	 \node[node_style] (6) at ($\fscale*(0.5, -0.5)$) {};
        	 \node[node_style] (7) at ($\fscale*(0.5, 0)$) {};
        	 \node[node_style] (8) at ($\fscale*(0.5, -1)$) {};
        	 \node[node_style, label={above:$v$}] (9) at ($\fscale*(0, 1)$) {};
        
        	 \draw (0) to  (1);
        	 \draw (2) to  (3);
        	 \draw (4) to  (5);
        	 \draw (4) to  (3);
        	 \draw (2) to  (5);
        	 \draw (0) to  (3);
        	 \draw (2) to  (1);
        	 \draw (0) to  (5);
        	 \draw (4) to  (1);
        	 \draw (3) to  (6);
        	 \draw (1) to  (6);
        	 \draw (3) to  (7);
        	 \draw (1) to  (7);
        	 \draw (1) to  (8);
        	 \draw (5) to  (6);
        	 \draw (3) to  (8);
        	 \draw (5) to  (8);
        	 \draw (5) to  (7);
        	 \draw (9) to  (0);
        	 \draw (9) to  (2);
        	 \draw (9) to  (4);
        	 \draw (7) to  (9);
        	 \draw (6) to  (9);
        	 \draw (9) to  (8);
        	 \draw (0) to  (2);
        	 \draw (2) to  (4);
        	 \draw (7) to  (6);
        	 \draw (6) to  (8);
        	 \draw (1) to  (3);
        	 \draw (3) to  (5);
        \end{tikzpicture}
        \caption{The graph $G_k$ for $k=3$. This graph satisfies $ChVS(G_k) = 1$ and $ChED(G_k) \geq k$ and $\alpha(G_k) = 2$.}
        \label{fig:chordal_edge_deletion_graph}
    \end{figure}
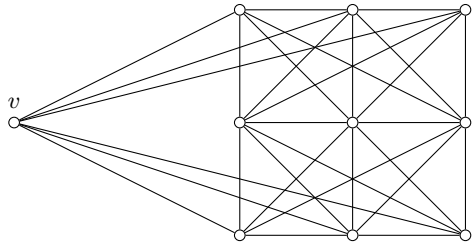

    For any integer $k$, we consider the graph $G_k$ defined as follows: consider the vertices $v,$ $a_1, \ldots, a_k,$ $b_1, \ldots, b_k, c_1, \ldots, c_k$.
    Make $\{a_1, \ldots, a_k\}$, $\{b_1, \ldots, b_k\}$ and $\{c_1, \ldots, c_k\}$ cliques and add the edges $v a_i$, $a_i b_j$, $b_i c_j$ and $c_j v$ for every $i$ and $j \in [k]$.

    With only one split, we can turn this graph into a chordal graph by splitting the vertex $v$ into two copies $v_1$ and $v_2$ such that $v_1$ is adjacent to the vertices $a_1, \ldots, a_k$ and $v_2$ is adjacent to the vertices $c_1, \ldots, c_k$.
    The vertex elimination order is $v_1, v_2, a_1, \ldots, a_k, b_1, \ldots, b_k, c_1, \ldots, c_k$.

    This graph needs at least $k$ edge deletions because there are $k$ edge independent induced $C_4$'s: $v, a_i, b_i, c_i$ for each $i \in [k]$.
    Thus $ChVS(G_k) = 1$ and $ChED(G_k) \geq k$.

\end{proof}

\subsection{Vertex Splitting versus Vertex Deletion into Chordal Graphs}

We now turn to the relationship with Vertex Deletion.
The problem $ChVD$ is known to be \NP-complete \cite{lewis1980node}.

\begin{theorem}
    Let $\Pi$ be a hereditary graph class.
    For any graph $G$, $\Pi VD(G) \leq \Pi VS(G)$.
    In particular, for any graph $G$, $ChVD(G) \leq ChVS(G)$.
\end{theorem}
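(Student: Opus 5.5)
Given a minimum split sequence $\sigma$ of length $\ell = \Pi VS(G)$ turning $G$ into a graph $H \in \Pi$, I will produce a set $S \subseteq V(G)$ with $|S| \le \ell$ such that $G - S \in \Pi$. The natural choice is to let $S$ be the set of original vertices $v \in V(G)$ on which $\sigma$ acts, meaning that at least one split in $\sigma$ is applied to $v$ or to one of its descendants $D(v)$. Each such $v$ accounts for at least one split, and distinct original vertices account for disjoint splits, since every split acts on a descendant of exactly one original vertex. Hence $|S| \le \ell$.

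Next I will show that $G - S$ is isomorphic to an induced subgraph of $H$. Every vertex $u \in V(G)\setminus S$ is never split, so it survives unchanged in $H$. Take two such vertices $u, w$. If $uw \in E(G)$, consider the later splits: they are applied only to descendants of vertices of $S$, and an edge between two unsplit vertices is never touched by the definition of $E(G')$. So $uw$ remains an edge of $H$. Conversely, splits never create an edge between two vertices that are both present before and after the split, so no edge $uw$ appears in $H$ unless it was in $G$. A short induction on the length of $\sigma$ makes this precise: the subgraph induced by the never-split vertices is the same at every step. Therefore $G - S = H[V(G)\setminus S]$.

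Since $\Pi$ is hereditary and $H \in \Pi$, it follows that $G - S \in \Pi$, and so $\Pi VD(G) \le |S| \le \ell = \Pi VS(G)$. The chordal case follows directly, because chordal graphs form a hereditary class: an induced subgraph of a graph with no induced cycle of length at least $4$ again has no such cycle. The only point that needs care is the bookkeeping showing that every split is charged to a unique original vertex and that unsplit vertices keep exactly their mutual adjacencies. I expect this to be routine, since it follows from the definition of $E(G')$, which removes and adds only edges incident to the split vertex.
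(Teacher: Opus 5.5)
Your proposal is correct and follows the paper's proof: take $S$ to be the set of original vertices that get split, observe that $G - S$ is an induced subgraph of the split graph, and conclude by heredity of $\Pi$. You also spell out two points the paper leaves implicit, namely the bound $|S| \le \ell$ and the fact that adjacencies among unsplit vertices are preserved.
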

\begin{proof}
Consider a sequence $\sigma$ of length $k = ChVS(G)$ of splits on $G$, which turns the graph into a graph $G_\sigma$ of $\Pi$.
We denote by $S$ the set of all the vertices of $G$ which are split.
Then $G - S$ is an induced subgraph of $G_\sigma$.
Now since $G_\sigma$ is in $\Pi$ and as $\Pi$ is hereditary, we conclude that $G-S$ is in $\Pi$ and that $\Pi VD(G) \leq \Pi VS(G)$.
\end{proof}

\begin{theorem}
\label{chvd-chvs}
    The parameters $ChVD$ and $ChVS$ are not equivalent.
\end{theorem}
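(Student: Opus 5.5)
We will exhibit a family of graphs on which $ChVD$ stays bounded while $ChVS$ grows without bound. Combined with the previous theorem ($ChVD(G) \le ChVS(G)$ for every $G$), this shows that $ChVS$ cannot be bounded above by any function of $ChVD$, so the two parameters are not equivalent. The candidate family is the complete bipartite graphs $K_{2,m}$. Let the two vertices of the small side be $v,w$ and the $m$ vertices of the large side be $x_1,\dots,x_m$.

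\emph{Upper bound on $ChVD$.} Every induced cycle of length at least $4$ in $K_{2,m}$ is a $C_4$ of the form $v x_i w x_j$. Deleting $v$ leaves the star centred at $w$, which is a tree and hence chordal. So $ChVD(K_{2,m}) \le 1$; in fact equality holds for $m\ge 2$.

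\emph{Lower bound on $ChVS$.} Since $K_{2,m}$ is bipartite, hence triangle-free, Corollary~\ref{corollary:triangle_free_graphs} gives $ChVS(K_{2,m}) = ChVXS(K_{2,m})$. It therefore suffices to lower-bound the number of exclusive splits. I will reuse the counting argument from the proof of Theorem~\ref{thm:nphardness}.
\begin{itemize}
\item Splitting preserves triangle-freeness, since every edge of the new graph comes from an edge of the old one. Hence the final graph $H$ is a triangle-free chordal graph, that is, a forest.
\item An exclusive split preserves the number of edges and increases the number of vertices by one.
\item After $s$ exclusive splits, $H$ therefore has $m+2+s$ vertices and $2m$ edges.
\item A forest satisfies $|E|\le |V|-1$, so $2m \le m+1+s$, which gives $s\ge m-1$.
\end{itemize}
The bound is attained: exclusively split each of $x_2,\dots,x_m$ into one copy adjacent to $v$ and one adjacent to $w$. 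This uses $m-1$ splits and turns the graph into a tree, namely the path $v x_1 w$ with pendant leaves attached at $v$ and at $w$. Hence $ChVS(K_{2,m}) = m-1$, while $ChVD(K_{2,m}) = 1$, and letting $m\to\infty$ finishes the argument.

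The step needing the most care is the lower bound on $ChVS$. A priori, a non-exclusive split might remove cycles more cheaply, since it can add edges. This is exactly the situation handled by Lemma~\ref{lemma:inclusive_to_exclusive}, so I will rely on that lemma (through Corollary~\ref{corollary:triangle_free_graphs}) rather than on a direct argument. If one additionally wants the separation for graphs with $\alpha(G)\le 2$, in the spirit of the previous theorem, a denser construction would be needed. A natural try is a vertex $v$ whose neighbourhood splits into many groups that must go to different copies of $v$, while deleting $v$ alone destroys every induced $C_4$ and $C_5$. Proving a matching lower bound there would be the real obstacle. The statement as given, however, only requires the $K_{2,m}$ family.
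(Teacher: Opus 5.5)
Your proof is correct and follows essentially the paper's route. The paper uses $k$ copies of $C_4$ glued at one vertex rather than $K_{2,m}$, but the argument is the same: one vertex meets every cycle, so $ChVD=1$, while splits preserve triangle-freeness, so the target must be a forest and a vertex/edge count forces many splits. You do not need Corollary~\ref{corollary:triangle_free_graphs}: an inclusive split never decreases the number of edges, so $|E(H)|\ge 2m$ and $|V(H)|=m+2+s$ already give $s\ge m-1$ for arbitrary splits, which is how the paper handles it.
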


\begin{proof}
    We prove that, for any $k \ge 1$, there exists a graph $G_k$ such that $ChVD(G_k) = 1$ and $ChVS(G_k) \geq k$.
    Let $G_k$ be the graph obtained from a star on $k+1$ vertices by transforming each of its edges into a cycle of order $4$.
    We have $ChVD(G_k) = 1$, as removing the center of the star transforms the graph into a union of $P_3$.
    
    We now prove that at least $k$ splits are required to turn $G_k$ into a chordal graph.
    This graph has $3k + 1$ vertices and $4k$ edges.
    Let $\sigma$ be a sequence of at most $k-1$ splits on $G_k$.
    We denote by $G_\sigma$ the obtained graph.
    This graph has at most $3k+1 + k-1 = 4k$ vertices.
    The number of edges has not decreased (more precisely it is the same if the splits are exclusive).
    Thus $G_\sigma$ has at least $4k$ edges.
    Thus $G_\sigma$ is not acyclic and contains therefore a cycle.
    Consider an induced cycle of $G_\sigma$.
    This cycle cannot be of size $3$ because $G_k$ is triangle-free and splits do not decrease the girth of the graph.
    We conclude that $ChVS(G_k) \geq k$.
\end{proof}

\medskip

We now focus on graphs such that their complement is triangle-free: in other words graphs $G$ such that $\alpha(G) \leq 2$.
In this case, we can prove that Chordal Vertex Deletion is "nearer" to Chordal Vertex Splitting than in the general case.
The next Lemma is used to prove Theorem~\ref{theorem:chordal_vertex_deletion_1}.

\begin{lemma} \label{lemma:chordal_condition}
    For any graph $G$, if $\alpha(G) \leq 2$ and $\chi(\overline{G}) \geq 3$, then $G$ contains an induced $C_4$ or an induced $C_5$.
\end{lemma}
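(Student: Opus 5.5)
The plan is to move to the complement. Set $H=\overline{G}$. The hypothesis $\alpha(G)\le 2$ says that $H$ is triangle-free, and $\chi(\overline{G})\ge 3$ says that $H$ is not bipartite. So $H$ contains an odd cycle. Let $C$ be a shortest odd cycle of $H$, of length $\ell$. Since $H$ is triangle-free, $\ell\ge 5$.

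First I will check that $C$ is induced in $H$. A chord would split $C$ into two shorter cycles whose lengths add up to $\ell+2$, which is odd, so one of them is odd. That odd cycle is shorter than $C$, contradicting minimality. Hence $H$ contains an induced $C_\ell$ with $\ell$ odd and $\ell\ge 5$. Taking complements, $G$ contains an induced $\overline{C_\ell}$ on the same vertex set.

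I will then split into cases on $\ell$.

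\emph{Case $\ell=5$.} We have $\overline{C_5}\cong C_5$, so $G$ contains an induced $C_5$.

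\emph{Case $\ell\ge 7$.} Label $C$ as $x_1,\dots,x_\ell$ in cyclic order. I will look at four vertices $x_1,x_2,x_4,x_5$. Their $H$-edges inside $C$ are exactly $x_1x_2$ and $x_4x_5$; there are no others, because $C$ is induced and $\ell\ge 7$ keeps $x_5$ and $x_1$ non-adjacent on the cycle. So $H[\{x_1,x_2,x_4,x_5\}]$ is $2K_2$. Its complement in $G$ is $C_4$, namely $x_1x_4x_2x_5x_1$, and it is induced because the set is induced in $H$.

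In both cases $G$ contains an induced $C_4$ or $C_5$, which proves Lemma~\ref{lemma:chordal_condition}.

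I do not see a real obstacle. The only step needing care is the parity argument showing that a shortest odd cycle is chordless; everything after that is a direct check of which pairs are adjacent among a few vertices.
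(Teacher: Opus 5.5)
Your proof is correct and is essentially the paper's argument: the paper takes a minimal counterexample and, whenever some pair among $v_1,v_2,v_4,v_5$ is adjacent in $\overline{G}$, passes to a shorter odd cycle of $\overline{G}$ (or to the $5$-vertex case), whereas you build this in from the start by choosing a shortest odd cycle of $\overline{G}$ and using the parity argument to make it chordless. Both proofs then finish the same way, with $\overline{C_5}\cong C_5$ for length $5$, and for length at least $7$ the $2K_2$ on $x_1,x_2,x_4,x_5$ complementing to an induced $C_4$; your shortest-odd-cycle formulation is, if anything, the cleaner of the two.
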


\begin{proof}
    By contradiction, let $G$ be a graph with the smallest number of vertices that does not satisfy Lemma~\ref{lemma:chordal_condition}. 
    
    Since $\chi(\overline G) \ge 3$, there exists $k \ge 1$ and vertices $v_1, \dots v_{2k+1}$, such that $\{v_1, \dots, v_{2k+1}\}$ is a cycle in $\overline G$. Moreover since $\alpha(G) \leq 2$, then $k \ge 2$.

    If $|G| = 5$, necessarily, $k = 2$.
    Since $\alpha(G) \leq 2$, it is not possible that there exists $i$ such that $v_i v_{i+2}$ is not an edge, otherwise there would be a non triangle $v_i, v_{i+1}, v_{i+2}$. We deduce that $G$ is $C_5$ (see Figure~\ref{fig:C5}), a contradiction.

     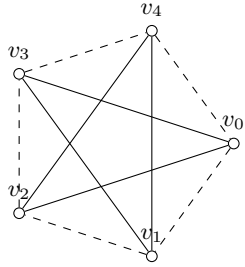
\begin{figure}[!h] 
        \centering
        \begin{tikzpicture}
        [ node_style/.style={circle, draw, inner sep=0.05cm}  ]
        \def\fscale{1.5} 
        \node[node_style, label={above:$v_0$}] (0) at ($\fscale*(0.951, 0)$) {};
    	 \node[node_style, label={above:$v_1$}] (1) at ($\fscale*(0.225, -1)$) {};
    	 \node[node_style, label={above:$v_2$}] (2) at ($\fscale*(-0.951, -0.618)$) {};
    	 \node[node_style, label={above:$v_3$}] (3) at ($\fscale*(-0.951, 0.618)$) {};
    	 \node[node_style, label={above:$v_4$}] (4) at ($\fscale*(0.225, 1)$) {};
    
    	 \draw[dashed] (3) to  (4);
    	 \draw[dashed] (4) to  (0);
    	 \draw[dashed] (0) to  (1);
    	 \draw[dashed] (1) to  (2);
    	 \draw[dashed] (2) to  (3);
    	 \draw (3) to  (1);
    	 \draw (2) to  (4);
    	 \draw (3) to  (0);
    	 \draw (0) to  (2);
    	 \draw (4) to  (1);
    \end{tikzpicture}
        \caption{In the case where $G$ has 5 vertices, we show that $G$ is $C_5$.}
    \label{fig:C5}
    \end{figure}

    Otherwise $|G| \geq 6$ (see Figure~\ref{fig:C4}.
    Consider the subgraph of $G$ induced by the vertices $\{v_1, v_4, v_2, v_5\}$.
    By hypothesis, it cannot be a $C_4$.
    Thus, as we already know that $v_1v_2$ and $v_4v_5$ are non-edges in $G$ and that $v_2v_4$ is an edge in $G$ (otherwise $(v_2, v_3, v_4)$ is a non-triangle ),  there must be one other non-edge among $\{v_1v_4, v_2v_5, v_1 v_5\}$. 
    If $v_1v_5$ is a non-edge, then $(v_1, v_2, v_3,v_4,v_5)$ is an induced subgraph which satisfies the hypothesis of the lemma, which is not possible by the case $|G| = 5$. 
    Therefore $v_1v_4$ ($v_2v_5$ resp.) is a non-edge, and thus the subgraph induced by $\{v_1, v_4, v_5, \dots v_{2k+1}\}$ ($\{v_1, v_2, v_5, v_6, \dots v_{2k+1}\}$ resp.) is a smaller counter example, which contradicts the minimality of $G$.
    A contradiction.

    We conclude that no smallest counter-example exists.

    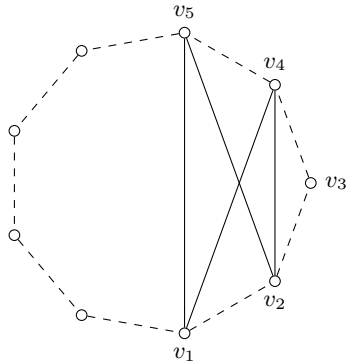
\begin{figure}[!h] 
        \centering
        \begin{tikzpicture}
        [ node_style/.style={circle, draw, inner sep=0.05cm}  ]
        \def\fscale{2} 
    	 \node[node_style, label={right:$v_3$}] (0) at ($\fscale*(0.985, 0)$) {};
    	 \node[node_style, label={below:$v_2$}] (1) at ($\fscale*(0.747, -0.653)$) {};
    	 \node[node_style, label={below:$v_1$}] (2) at ($\fscale*(0.146, -1)$) {};
    	 \node[node_style] (3) at ($\fscale*(-0.538, -0.879)$) {};
    	 \node[node_style] (4) at ($\fscale*(-0.985, -0.347)$) {};
    	 \node[node_style] (5) at ($\fscale*(-0.985, 0.347)$) {};
    	 \node[node_style] (6) at ($\fscale*(-0.538, 0.879)$) {};
    	 \node[node_style, label={above:$v_{5}$}] (7) at ($\fscale*(0.146, 1)$) {};
    	 \node[node_style, label={above:$v_{4}$}] (8) at ($\fscale*(0.747, 0.653)$) {};
    
    	 \draw[dashed] (0) to  (8);
    	 \draw[dashed] (8) to  (7);
    	 \draw[dashed] (7) to  (6);
    	 \draw[dashed] (6) to  (5);
    	 \draw[dashed] (5) to  (4);
    	 \draw[dashed] (4) to  (3);
    	 \draw[dashed] (3) to  (2);
    	 \draw[dashed] (2) to  (1);
    	 \draw[dashed] (1) to  (0);
    	 \draw (8) to  (1);
    	 \draw (1) to  (7);
    	 \draw (7) to  (2);
    	 \draw (2) to  (8);
    \end{tikzpicture}
        \caption{In the case where all the inner edges of the non-cycle are present, there exists an induced $C_4$ in $G$.}
        \label{fig:C4}
    \end{figure}

\end{proof}

\begin{theorem} \label{theorem:chordal_vertex_deletion_1}
    Let $G$ be a graph such that $\alpha(G) \leq 2$.
    Then $ChVD(G) = 1 \iff ChVS(G) =1$.
\end{theorem}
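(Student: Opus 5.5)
The plan is to prove the two implications separately. The direction $ChVS(G)=1 \Rightarrow ChVD(G)=1$ follows from earlier results, and the other direction follows from Lemma~\ref{lemma:chordal_condition} applied to the neighbourhood of the deleted vertex.

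\emph{($\Leftarrow$).} Suppose $ChVS(G)=1$. Then $G$ is not chordal, since otherwise $ChVS(G)=0$, so $ChVD(G)\ge 1$. Chordal graphs form a hereditary class, so the theorem above gives $ChVD(G)\le ChVS(G)=1$. Hence $ChVD(G)=1$.

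\emph{($\Rightarrow$).} Suppose $ChVD(G)=1$. Then $G$ is not chordal, which gives $ChVS(G)\ge 1$, and there is a vertex $v$ such that $G-v$ is chordal. It remains to exhibit a single split of $v$ that yields a chordal graph. Let $H=G[N(v)]$. We have $\alpha(H)\le\alpha(G)\le 2$.

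We first show that $\chi(\overline{H})\le 2$. If instead $\chi(\overline{H})\ge 3$, then Lemma~\ref{lemma:chordal_condition} says $H$ contains an induced $C_4$ or an induced $C_5$. This cycle is also induced in $G-v$, because $H$ is an induced subgraph of $G-v$. That contradicts the chordality of $G-v$.

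So $\overline{H}$ is bipartite, and $N(v)$ can be partitioned into two sets $A$ and $B$, each independent in $\overline{H}$. Each of $A$ and $B$ is therefore a clique in $G$; one of them may be empty, and this case is harmless.

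We now perform the exclusive split $(v,A,B)$, producing $v_1$ with $N(v_1)=A$ and $v_2$ with $N(v_2)=B$. Both $v_1$ and $v_2$ are simplicial, and $v_1,v_2$ are non-adjacent. The ordering that places $v_1$ and $v_2$ first, followed by a perfect elimination ordering of the chordal graph $G-v$, is then a perfect elimination ordering of the split graph. By Dirac's characterization the split graph is chordal, so $ChVS(G)\le 1$, and together with $ChVS(G)\ge 1$ this gives $ChVS(G)=1$.

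The only substantive step is recognizing that the obstruction to splitting $v$ into two simplicial copies is exactly $\chi(\overline{G[N(v)]})\ge 3$. Lemma~\ref{lemma:chordal_condition} converts that obstruction into an induced short cycle inside $G-v$, which is what makes this direction work. The same argument shows that the split can be taken to be exclusive.
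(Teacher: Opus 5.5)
Your proof is correct and follows the paper's argument. Both use Lemma~\ref{lemma:chordal_condition} to partition $N(v)$ into two cliques, split $v$ into two simplicial copies, and prepend them to a perfect elimination ordering of $G-v$. The paper applies the lemma to all of $G-v$ and intersects the resulting two-clique partition with $N(v)$, whereas you apply it directly to $G[N(v)]$; you also make explicit the non-chordality bookkeeping behind the equalities ``$=1$'', which the paper leaves implicit.
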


\begin{proof}
    If $ChVS(G) = 1$, then $ChVD(G) = 1$ by deleting the split vertex.

    Suppose now that $ChVD(G) = 1$.
    There exists a vertex $v$, such that $G \setminus \{v\}$ is chordal.
    As $\alpha(G \setminus \{v\}) \leq 2$, we deduce by Lemma~\ref{lemma:chordal_condition} that $\chi( \overline{G \setminus \{v\}}) \leq 2$.
    Thus $G \setminus \{v\}$ can be partitioned into two complete subgraphs $A$ and $B$.

    We define $G'$ by splitting $v$ into $v_1$ and $v_2$ such that $v_1$ is connected to $N(v) \cap A$ and $v_2$ is connected to $N(v) \cap v_2$, the vertices $v_1$ and $v_2$ are simplicial.
    We start an elimination order of $G'$ by deleting these two vertices.
    Then, as $G \setminus \{v\}$ is chordal, we complete this elimination order of $G'$ by concatenating the elimination order of $G \setminus \{v\}$.
    Thus $ChVS(G) = 1$.
\end{proof}

We now prove that the generalization of the previous equivalence is false.

\begin{theorem}
\label{chvd-chvs-alpha}
    The parameters $ChVD$ and $ChVS$ are different among graphs $G$ having $\alpha(G) \leq 2$.
    More precisely, there exists a graph $G$ such that $ChVD(G) < ChVS(G)$ such that $\alpha(G) = 2$.
\end{theorem}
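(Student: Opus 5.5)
We need to exhibit a graph $G$ with $\alpha(G)=2$ and $ChVD(G)<ChVS(G)$. By Theorem~\ref{theorem:chordal_vertex_deletion_1} the gap cannot occur at value $1$, so we aim for $ChVD(G)=2$ and $ChVS(G)\ge 3$. Since $\alpha(G)\le 2$, $G$ has no induced cycles of length at least $6$. So chordality is violated only by induced $C_4$'s and $C_5$'s.

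\textbf{Construction.} We work with $\overline{G}$, which is triangle-free; an induced $C_4$ in $G$ corresponds to an induced $2K_2$ in $\overline{G}$, and a $C_5$ to a $C_5$. By Lemma~\ref{lemma:chordal_condition}, when $\alpha\le 2$ the chordal graphs are essentially those whose complement is bipartite, up to the obstruction structure. This suggests the following strategy: deleting two vertices $x,y$ should make $\overline{G}-\{x,y\}$ bipartite and the resulting co-bipartite graph chordal. Meanwhile, no two splits should do so. A splitting only removes adjacency in $G$ locally, i.e.\ it adds ``non-edges'' between copies. So the splits of $v$ must separate $N(v)$ into cliques in a way compatible with the rest.

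As a candidate I would take $G=\overline{C_7}$ or a small variant. We have $\alpha(\overline{C_7})=2$, and deleting two suitable vertices of $C_7$ leaves a path or a union of paths in the complement. I would then check whether the corresponding co-path graph is chordal; if not, I would adjust (for example $\overline{C_7}$ plus pendant structure, or $\overline{C_9}$ with $ChVD=2$). The target is a small concrete graph, around $7$--$9$ vertices, whose deletion number is verified by hand or by computer.

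\textbf{Lower bound for splitting.} Suppose two splits suffice. Let $S$ be the set of split vertices, so $|S|\le 2$. The proof of $ChVD\le ChVS$ shows $G-S$ is chordal, so $S$ must be a chordal deletion set of size $2$; the symmetry of $C_7$ reduces these to few cases, up to automorphism. For each case we must show that no way of distributing $N(v)$ among copies (with possibly both $S$-vertices split, in either order, and copies possibly split again) yields a chordal graph. In $G'$, each copy of $v$ must lie on no induced $C_{\ge4}$. Since the other vertices form the co-bipartite chordal graph $G-S$, one argues via Dirac's perfect elimination: some copy must eventually be non-simplicial. Alternatively, exhibit an induced $C_4$/$C_5$ through $v$ whose two $v$-neighbors cannot be separated without creating a new induced cycle using a copy and vertices of the other clique.

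\textbf{Main obstacle.} The case analysis of the lower bound, especially when both deleted vertices are split and copies interact, is the hard step. In particular, one copy of $x$ may be adjacent to a copy of $y$. I expect to handle this by showing the neighborhood of each split vertex in $G-S$ cannot be covered by two cliques compatible with an elimination order. If that fails, I would fall back on a computer-verified small example, stating the verification.
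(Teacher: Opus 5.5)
There is a genuine gap. The statement is purely existential, so the proof consists of a concrete witness $G$ together with a verification that $ChVS(G)\ge 3$. Your proposal supplies neither, and the one concrete candidate you name fails.

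Take $G=\overline{C_7}$ with $C_7=0\,1\,2\,3\,4\,5\,6$. Indeed $\alpha(G)=2$ and $ChVD(G)=2$: one deletion leaves $\overline{P_6}$, which has an induced $C_4$, and deleting $0$ and $2$ leaves a gem ($K_1$ joined to $P_4$), which is chordal. But two splits also suffice:
\begin{itemize}
\item split $0$ into $0'$ with $N(0')=\{3,5\}$ and $0''$ with $N(0'')=\{2,4\}$;
\item then split $2$ (now adjacent to $0'',4,5,6$) into $2'$ with $N(2')=\{5\}$ and $2''$ with $N(2'')=\{0'',4,6\}$.
\end{itemize}
The ordering $0',2',0'',2'',5,4,1,3,6$ is a perfect elimination ordering of the result, so $ChVS(\overline{C_7})=2=ChVD(\overline{C_7})$. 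Every chordal deletion set of size $2$ in $\overline{C_7}$ is, up to symmetry, $\{0,2\}$. So your planned case analysis would find this solution rather than an obstruction.

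Your fallbacks do not help either. Attaching a pendant vertex to $\overline{C_7}$ destroys $\alpha\le 2$: the pendant vertex and any edge of $C_7$ avoiding its attachment vertex form an independent set of size $3$. Also $ChVD(\overline{C_9})=3$, not $2$. Deleting any two vertices of $C_9$ leaves $P_7$ or $P_a+P_b$ with $a+b=7$, each of which contains an induced $2K_2$, so an induced $C_4$ of $\overline{C_9}$ survives.

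Your lower-bound sketch is likewise not an argument. The claim that ``some copy must eventually be non-simplicial'' is exactly what has to be shown. In $\overline{C_7}$, the neighbourhood of each deleted vertex can in fact be distributed over two cliques compatibly with an elimination order.

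What is sound in your plan is the reduction. An induced $C_4$ or $C_5$ that avoids the set $S$ of split vertices survives unchanged. So with two splits, $S$ is a chordal deletion set of size $2$ and each vertex of $S$ is split exactly once, which makes the verification finite.

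This is how the paper proceeds, but with a graph where the check comes out negative. It uses a specific $9$-vertex graph with $\alpha=2$: cliques $\{0,1,2,3\}$ and $\{4,5,6\}$ plus two vertices $7,8$. It contains the vertex-disjoint induced $C_4$'s $(7,0,2,6)$ and $(8,1,3,4)$, which give $ChVD\ge 2$ and force one split on each cycle. Since $G-\{7,8\}$ is chordal, $ChVD=2$. An exhaustive computer search over splits of one vertex of each cycle then shows $ChVS\ge 3$.

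To complete your proof you need such a concrete graph, plus a verification (by hand or by machine) that this finite check finds no chordal outcome.
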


\begin{proof}
We consider the graph $G$ on 9 vertices described on Figure~\ref{fig:chordal_vertex_deletion_alpha_2}.

\begin{figure}
    \centering
        \begin{tikzpicture}
            [ node_style/.style={circle, draw, inner sep=0.05cm}  ]
            \def\fscale{2} 
        	 \node[node_style] (0) at ($\fscale*(0, 1)$) {$0$};
        	 \node[node_style] (1) at ($\fscale*(0.2, 0.6)$) {$1$};
        	 \node[node_style] (2) at ($\fscale*(0.6, 0.2)$) {$2$};
        	 \node[node_style] (3) at ($\fscale*(1, 0)$) {$3$};
        	 \node[node_style] (4) at ($\fscale*(1, -0.4)$) {$4$};
        	 \node[node_style] (5) at ($\fscale*(0.6, -0.6)$) {$5$};
        	 \node[node_style] (6) at ($\fscale*(0.2, -1)$) {$6$};
        	 \node[node_style] (7) at ($\fscale*(-1, 0)$) {$7$};
        	 \node[node_style] (8) at ($\fscale*(-1, -0.4)$) {$8$};
        
        	 \draw (7) to  (0);
        	 \draw (7) to  (1);
        	 \draw (7) to  (3);
        	 \draw (7) to  (4);
        	 \draw (7) to  (6);
        	 \draw (3) to  (4);
        	 \draw (3) to  (5);
        	 \draw (3) to  (6);
        	 \draw (2) to  (5);
        	 \draw (2) to  (6);
        	 \draw (1) to  (6);
        	 \draw (8) to  (0);
        	 \draw (8) to  (1);
        	 \draw (8) to  (2);
              \draw (8) to  (4);
        	 \draw (8) to  (5);
        	 \draw (7) to  (8);
        	 \draw (5) to  (4);
        	 \draw (6) to  (5);
        	 \draw (6) to  (4);
        	 \draw (3) to  (2);
        	 \draw (2) to  (1);
        	 \draw (1) to  (0);
        	 \draw (2) to  (0);
        	 \draw (1) to  (3);
        	 \draw (3) to  (0);
        \end{tikzpicture}
        
    \caption{Example of a graph $G$ which has $ChVD(G) = 2$ and $ChVS(G) > 2$ with $\alpha(G) = 2$. Vertices $0,1,2,3$ and $4,5,6$ are forming cliques.}
    \label{fig:chordal_vertex_deletion_alpha_2}
\end{figure}
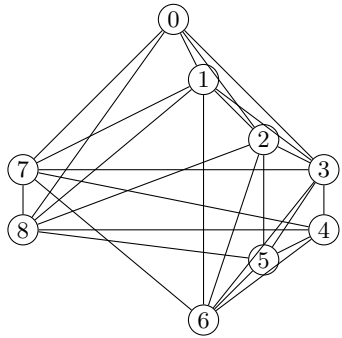

We can check that $\alpha(G) = 2$ as its complement is triangle-free.
Remark that the cycles $(7,0,2,6)$ and $(8,1,3,4)$ are induced.
As these cycles are disjoint, then $ChVD(G) \geq 2$.
The induced subgraph $G \setminus \{7,8\}$ is chordal as there exists an elimination order which is $0,1,2,3,4,5,6$.

We checked with a computer, by brute force, that $ChVS(G) \geq 3$ by trying all possible splits on one vertex of the cycle $(7,0,2,6)$ and on one vertex of the cycle $(8,1,3,4)$.

\end{proof}

\section{Concluding Remarks}

We initiated a study of vertex splitting as a graph
modification operation toward interval graphs. We proved that deciding whether a
graph can be transformed into an interval graph using at most $k$ vertex splits
is \NP-complete, even when the input is restricted to subcubic planar bipartite graphs. 
We also showed that,
on triangle-free graphs, inclusive and exclusive splitting coincide for chordal,
interval, and unit interval (target) graphs.

On the positive side, we presented a
polynomial-time algorithm that computes the minimum number of splits needed to
transform a graph into a disjoint union of paths, together with a corresponding
split sequence. As a consequence, splitting triangle-free graphs into unit
interval graphs is solvable in polynomial time.

We further compared vertex splitting with edge and vertex deletion when the
target graph is chordal. We showed that chordal vertex splitting is always upper-bounded
by chordal edge deletion, and that chordal vertex deletion is always upper-bounded
by chordal vertex splitting. 
At the same time, we presented strong separations between these parameters, even on graphs with independence number at most two. 
These results confirm that splitting-based distance to chordal and interval graph classes behaves differently from deletion-based distance and requires separate analysis.

We conclude with several open questions. What is the complexity of chordal vertex
splitting and unit interval vertex splitting in general? Are there fixed-parameter
tractable or approximation algorithms for interval vertex splitting under natural parameters? 
As for auxiliary parameters, we note that splitting into interval graphs is fixed-parameter tractable when parameterized by the treewidth, and the number of splits of the input graph. This follows immediately from Courcelle's Theorem \cite{courcelle} since  the problem is expressible via monadic second order logic~\cite{eppstein2018planar}. However, the problem remains interesting when parameterized by other auxiliary parameters such as twin-width and modular width, to name a few.


\end{document}